\documentclass{article}
\usepackage[linesnumbered,ruled]{algorithm2e}
\usepackage[final,nonatbib]{nips_2017}
\usepackage[utf8]{inputenc} 
\usepackage[T1]{fontenc}    
\usepackage{hyperref}       
\usepackage{booktabs}       
\usepackage{amsfonts}       
\usepackage{microtype}      
\usepackage{xcolor}
\usepackage{url}
\usepackage{verbatim} 
\usepackage{graphicx}
\usepackage{caption} 
\usepackage{subcaption} 
\usepackage{multirow}
\usepackage{xspace}
\usepackage{epsfig}
\usepackage{amsmath}
\usepackage{amsthm}
\usepackage{amssymb}
\usepackage{times}
\usepackage{xr}

\newcommand{\xhdr}[1]{{\noindent\bfseries #1}.}

\newcommand{\name}{GraphSAGE}

\newcommand{\V}{\mathcal{V}}
\newcommand{\E}{\mathcal{E}}
\newcommand{\G}{\mathcal{G}}

\newcommand{\mb}{\mathbf}
\newcommand{\cut}[1]{}

\newtheorem{theorem}{Theorem}
\newtheorem{corollary}[theorem]{Corollary}
\newtheorem{lemma}{Lemma}

\graphicspath{{./FIG/}}

\title{Inductive Representation Learning on Large Graphs}

\def\sharedaffiliation{%
\end{tabular}
\begin{tabular}{c}}
\newcommand*\samethanks[1][\value{footnote}]{\footnotemark[#1]}

\author{
William L. Hamilton\thanks{The two first authors made equal contributions.}\\
\texttt{wleif@stanford.edu}
\And 
Rex Ying\samethanks\\
\texttt{rexying@stanford.edu}
\And 
Jure Leskovec \\
\texttt{jure@cs.stanford.edu}
\vspace{10pt}
\sharedaffiliation
Department of Computer Science\\
Stanford University\\
Stanford, CA, 94305
}

\begin{document}
\maketitle
\setcounter{footnote}{0}

\begin{abstract}
Low-dimensional embeddings of nodes in large graphs have proved extremely useful in a variety of
prediction tasks, from content recommendation to identifying protein functions. However,
most existing approaches require that all nodes in the graph are present during training of the embeddings; these previous approaches are inherently
{\em transductive} and do not naturally generalize to unseen nodes. 
Here we present \name, a general {\em inductive} framework that leverages node feature information (e.g., text attributes) to efficiently generate node embeddings for previously unseen data.
Instead of training individual embeddings for each node, we learn a function that generates embeddings by sampling and aggregating features from a node's local neighborhood.
Our algorithm outperforms strong baselines on three inductive node-classification benchmarks: we classify the category of unseen nodes in evolving information graphs based on citation and Reddit post data, \cut{in both supervised and unsupervised settings} and we show that our algorithm generalizes to completely unseen graphs using a multi-graph dataset of protein-protein interactions. 

\end{abstract}


\section{Introduction}
\label{sec:intro}

Low-dimensional  vector embeddings of nodes in large graphs\footnote{While it is common to refer to these data structures as social or biological \emph{networks}, we use the term \emph{graph} to avoid ambiguity with neural network terminology.} have proved extremely useful as feature inputs for a wide variety of prediction and graph analysis tasks \cite{cao2015grarep,grover2016node2vec,perozzi2014deepwalk,tang2015line,wang2016structural}.
The basic idea behind node embedding approaches is to use dimensionality reduction techniques to distill the high-dimensional information about a node's graph neighborhood into a dense vector embedding. 
These node embeddings can then be fed to downstream machine learning systems and aid in tasks such as node classification, clustering, and link prediction \cite{grover2016node2vec,perozzi2014deepwalk,tang2015line}.

However, previous works have focused on embedding nodes from a single fixed graph, and many real-world applications require embeddings to be quickly generated for unseen nodes, or entirely new (sub)graphs. 
This inductive capability is essential for high-throughput, production machine learning systems, which operate on evolving graphs and constantly encounter unseen nodes (e.g., posts on Reddit, users and videos on Youtube).
An inductive approach to generating node embeddings also facilitates generalization across graphs with the same form of features:
for example, one could train an embedding generator on protein-protein interaction graphs derived from a model organism, and then easily produce node embeddings for data collected on new organisms using the trained model. 

The inductive node embedding problem is especially difficult, compared to the transductive setting, because generalizing to unseen nodes  requires ``aligning'' newly observed subgraphs to the node embeddings that the algorithm has already optimized on. 
An inductive framework must learn to recognize structural properties of a node's neighborhood that reveal both the node's local role in the graph, as well as its global position. 
 
 Most existing approaches to generating node embeddings are inherently transductive.
 The majority of these approaches directly optimize the embeddings for each node using matrix-factorization-based objectives, and do not naturally generalize to unseen data, since they make predictions on nodes in a single, fixed graph \cite{cao2015grarep,grover2016node2vec,ng2001spectral,perozzi2014deepwalk,tang2015line,
 wang2016structural,wang2017community,xu2017embedding}.
 These approaches can be modified to operate in an inductive setting (e.g., \cite{perozzi2014deepwalk}), but these modifications tend to be computationally expensive, requiring additional rounds of gradient descent before new predictions can be made. 
 There are also recent approaches to learning over graph structures using convolution operators that offer promise as an embedding methodology \cite{kipf2016semi}.
 So far, graph convolutional networks (GCNs)  have only been applied in the transductive setting with fixed graphs \cite{kipf2016semi,kipf2016variational}. 
 In this work we both extend GCNs to the task of inductive unsupervised learning and propose a framework that generalizes the GCN approach to use trainable aggregation functions (beyond simple convolutions). 

\xhdr{Present work} We propose a general framework, called \name\ (\textsc{sa}mple and aggre\textsc{g}at\textsc{e}), for inductive node embedding.
Unlike embedding approaches that are based on matrix factorization, we leverage node features (e.g., text attributes, node profile information, node degrees) in order to learn an embedding function that generalizes to unseen nodes. 
By incorporating node features in the learning algorithm, we simultaneously learn the topological structure of each node's neighborhood as well as the distribution of node features in the neighborhood.
While we focus on feature-rich graphs (e.g., citation data with text attributes, biological data with functional/molecular markers), our approach can also make use of structural features that are present in all graphs (e.g., node degrees).
Thus, our algorithm can also be applied to graphs without node features.

Instead of training a distinct embedding vector for each node, we train a set of {\em aggregator functions} that learn to aggregate feature information from a node's local neighborhood (Figure \ref{fig:overview}).
Each aggregator function aggregates information from a different number of hops, or search depth, away from a given node. 
At test, or inference time, we use our trained system to generate embeddings for entirely unseen nodes by applying the learned aggregation functions. 
Following previous work on generating node embeddings, we design an unsupervised loss function that allows \name\ to be trained without task-specific supervision.
We also show that \name\ can be trained in a fully supervised manner. 

\begin{figure}
\vspace{-20pt}
\centering
\includegraphics[width=0.99\textwidth]{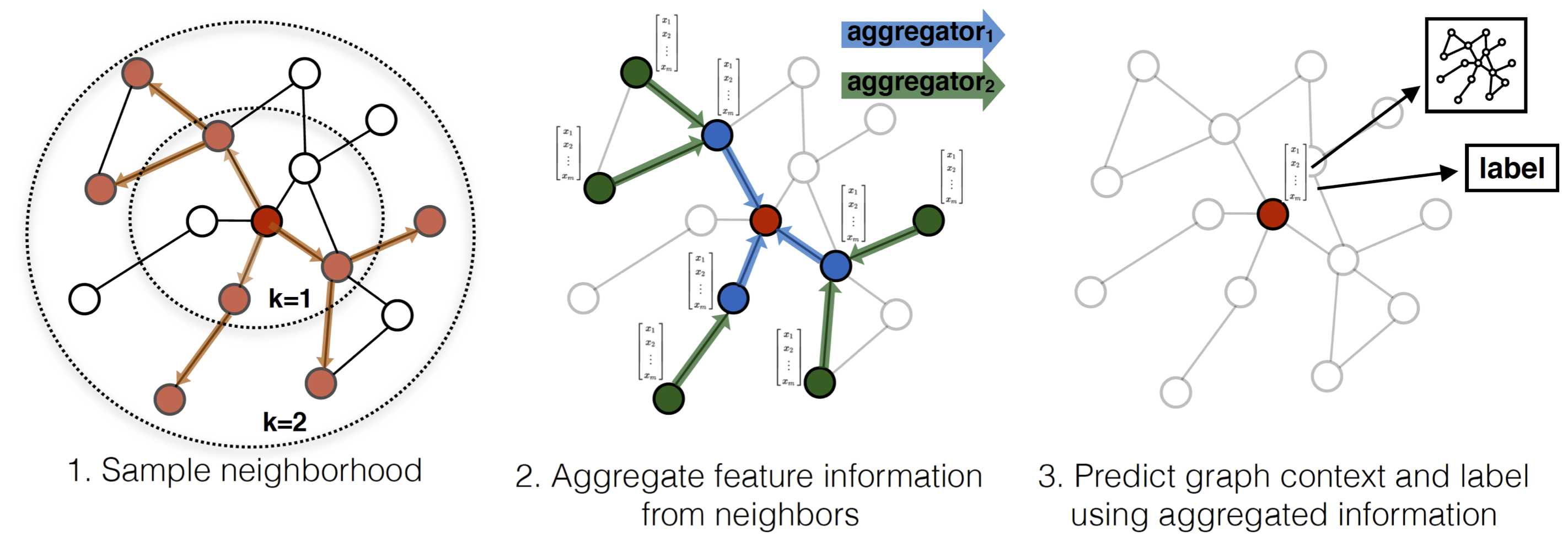}
\caption{Visual illustration of the \name\ sample and aggregate approach.}
\vspace{-15pt}
\label{fig:overview}
\end{figure}

We evaluate our algorithm on three node-classification benchmarks, which test \name's ability to generate useful embeddings on unseen data. 
We use two evolving document graphs based on citation data and Reddit post data (predicting paper and post categories, respectively), and a multi-graph generalization experiment based on a dataset of protein-protein interactions (predicting protein functions). 
  Using these benchmarks, we show that our approach is able to effectively generate representations for unseen nodes and outperform relevant baselines by a significant margin: across domains, our supervised approach improves classification F1-scores by an average of 51\% compared to using node features alone and \name\ consistently outperforms a strong, transductive baseline \cite{perozzi2014deepwalk}, despite this baseline taking ${\sim}100\times$ longer to run on unseen nodes.
We also show that the new aggregator architectures we propose provide significant gains (7.4\% on average) compared to an aggregator inspired by graph convolutional networks \cite{kipf2016semi}.
 Lastly, we probe the expressive capability of our approach and show, through theoretical analysis, that \name\ is capable of learning structural information about a node's role in a graph, despite the fact that it is inherently based on features (Section \ref{sec:theory}). 


\section{Related work}

Our algorithm is conceptually related to previous node embedding approaches, general supervised approaches to learning over graphs, and recent advancements in applying convolutional neural networks to graph-structured data.\footnote{In the time between this papers original submission to NIPS 2017 and the submission of the final, accepted (i.e., ``camera-ready'') version, there have been a number of closely related (e.g., follow-up) works published on pre-print servers. For temporal clarity, we do not review or compare against these papers in detail.}

\xhdr{Factorization-based embedding approaches}
There are a number of recent node embedding approaches that learn low-dimensional embeddings using random walk statistics and matrix factorization-based learning objectives \cite{cao2015grarep,grover2016node2vec,perozzi2014deepwalk,tang2015line,wang2016structural}.
These methods also bear close relationships to more classic approaches to spectral clustering \cite{ng2001spectral}, multi-dimensional scaling \cite{kruskal1964multidimensional}, as well as the PageRank algorithm \cite{page1999pagerank}. 
Since these embedding algorithms directly train node embeddings for individual nodes, they are inherently transductive and, at the very least, require expensive additional training (e.g., via stochastic gradient descent) to make predictions on new nodes.
In addition, for many of these approaches (e.g., \cite{grover2016node2vec,perozzi2014deepwalk,tang2015line,wang2016structural}) the objective function is invariant to orthogonal transformations of the embeddings, which means that the embedding space does not naturally generalize between graphs and can drift during re-training.
One notable exception to this trend is the Planetoid-I algorithm introduced by Yang et al.\ \cite{yang2016revisiting}, which is an inductive, embedding-based approach to semi-supervised learning.
However, Planetoid-I does not use any graph structural information during inference; instead, it uses the graph structure as a form of regularization during training.
Unlike these previous approaches, we leverage feature information in order to train a model to produce embeddings for unseen nodes. 

\xhdr{Supervised learning over graphs}
Beyond node embedding approaches, there is a rich literature on supervised learning over graph-structured data. 
This includes a wide variety of kernel-based approaches, where feature vectors for graphs are derived from various graph kernels (see \cite{shervashidze2011weisfeiler} and references therein). 
There are also a number of recent neural network approaches to supervised learning over graph structures \cite{dai2016discriminative,gori2005new,li2015gated,scarselli2009graph}.
Our approach is conceptually inspired by a number of these algorithms.
However, whereas these previous approaches attempt to classify entire graphs (or subgraphs), the focus of this work is generating useful representations for individual nodes. 

\xhdr{Graph convolutional networks}
In recent years, several convolutional neural network architectures for learning over graphs have been proposed (e.g., \cite{bruna2013spectral,duvenaud2015convolutional,
defferrard2016convolutional,kipf2016semi,niepert2016learning}).
The majority of these methods do not scale to large graphs or are designed for whole-graph classification (or both) \cite{bruna2013spectral,duvenaud2015convolutional,
defferrard2016convolutional,niepert2016learning}.
However, our approach is closely related to the graph convolutional network (GCN), introduced by Kipf et al. \cite{kipf2016semi,kipf2016variational}. 
The original GCN algorithm \cite{kipf2016semi} is designed for semi-supervised learning in a transductive setting, and the exact algorithm requires that the full graph Laplacian is known during training. 
A simple variant of our algorithm can be viewed as an extension of the GCN framework to the inductive setting, a point which we revisit in Section \ref{sec:aggs}.


\section{Proposed method: \name}

The key idea behind our approach is that we learn how to aggregate feature information from a node's local neighborhood (e.g., the degrees or text attributes of nearby nodes). 
We first describe the  \name\ embedding generation (i.e., forward propagation) algorithm, which generates embeddings for nodes assuming that the \name\ model parameters are already learned (Section \ref{sec:foward}). 
We then describe how the \name\ model parameters can be learned using standard stochastic gradient descent and backpropagation techniques (Section \ref{sec:learning}). 

%

\subsection{Embedding generation (i.e., forward propagation) algorithm}\label{sec:foward} 

\label{sec:method}
\begin{algorithm}
\caption{\name\ embedding generation (i.e., forward propagation) algorithm}
\label{alg:basic}
	\SetKwInOut{Input}{Input}\SetKwInOut{Output}{Output}
    \Input{~Graph $\G(\V,\E)$; input features $\{\mb{x}_v, \forall v\in \V\}$; depth $K$; weight matrices $\mb{W}^{k}, \forall k \in \{1,...,K\}$; non-linearity $\sigma$; differentiable aggregator functions $\textsc{aggregate}_k, \forall k \in \{1,...,K\}$; neighborhood function $\mathcal{N} : v \rightarrow 2^{\V}$}
    \Output{~Vector representations $\mb{z}_v$ for all $v \in \V$}
    \BlankLine
    $\mb{h}^0_v \leftarrow \mb{x}_v, \forall v \in \V$ \;
    \For{$k=1...K$}{
    	  \For{$v \in \V$}{
    	  $\mb{h}^{k}_{\mathcal{N}(v)} \leftarrow \textsc{aggregate}_k(\{\mb{h}_u^{k-1}, \forall u \in \mathcal{N}(v)\})$\;
    	  		$\mb{h}^k_v \leftarrow \sigma\left(\mb{W}^{k}\cdot\textsc{concat}(\mb{h}_v^{k-1}, \mb{h}^{k}_{\mathcal{N}(v)})\right)$
    	  }
    	  $\mb{h}^{k}_v\leftarrow \mb{h}^{k}_v/ \|\mb{h}^{k}_v\|_2, \forall v \in \V$
    	}
     $\mb{z}_v\leftarrow \mb{h}^{K}_v, \forall v \in \V$ 
\end{algorithm}

In this section, we describe the embedding generation, or forward propagation algorithm (Algorithm \ref{alg:basic}), which assumes that the model has already been trained and that the parameters are fixed. 
In particular, we assume that we have learned the parameters of $K$ aggregator functions (denoted $\textsc{aggregate}_k, \forall k \in \{1,...,K\}$), which aggregate information from node neighbors, as well as a set of weight matrices $\mb{W}^{k}, \forall k \in \{1,...,K\}$, which are used to propagate information between different layers of the model or ``search depths''. 
Section \ref{sec:learning} describes how we train these parameters. 

The intuition behind Algorithm \ref{alg:basic} is that at each iteration, or search depth, nodes aggregate information from their local neighbors, and as this process iterates, nodes incrementally gain more and more information from further reaches of the graph.

Algorithm \ref{alg:basic} describes the embedding generation process in the case where the entire graph, $\G=(\V, \E)$,  and features  for all nodes $\mb{x}_v, \forall v \in \V$, are provided as input.
We describe how to generalize this to the minibatch setting below. 
Each step in the outer loop of Algorithm \ref{alg:basic} proceeds as follows, where $k$ denotes the current step in the outer loop (or the depth of the search) and $\mb{h}^{k}$ denotes a node's representation at this step: First, each node $v\in\V$ aggregates the representations of the nodes in its immediate neighborhood, $\{\mb{h}^{k-1}_u, \forall u \in \mathcal{N}(v)\}$, into a single vector $\mb{h}^{k-1}_{\mathcal{N}(v)}$. 
Note that this aggregation step depends on the representations generated at the previous iteration of the outer loop (i.e., $k-1$),  and the $k=0$ (``base case'') representations are defined as the input node features. 
After aggregating the neighboring feature vectors, \name\ then concatenates the node's current representation, $\mb{h}_v^{k-1}$, with the aggregated neighborhood vector, $\mb{h}^{k-1}_{\mathcal{N}(v)}$, and this concatenated vector is fed through a fully connected layer with nonlinear activation function $\sigma$, which transforms the representations to be used at the next step of the algorithm (i.e.,  $ \mb{h}_v^{k}, \forall v \in \mathcal{V}$). 
For notational convenience, we denote the final representations output at depth $K$ as $\mb{z}_v \equiv \mb{h}^K_v , \forall v \in \mathcal{V}$.
The aggregation of the neighbor representations can be done by a variety of aggregator architectures (denoted by the \textsc{aggregate} placeholder in Algorithm 1), and we discuss different architecture choices in Section \ref{sec:aggs} below.
\cut{Figure \ref{fig:struct_learn} provides a simple visual illustration of  Algorithm \ref{alg:basic}.\cut{, when we use a hash function as an aggregator and use node degrees as features.}}

To extend Algorithm \ref{alg:basic} to the minibatch setting, given a set of input nodes, we first forward sample the required neighborhood sets (up to depth $K$) and then we run the inner loop (line 3 in Algorithm \ref{alg:basic}), but instead of iterating over all nodes, we compute only the representations that are necessary to satisfy the recursion at each depth (Appendix \ref{sec:minibatch} contains complete minibatch pseudocode).

\xhdr{Relation to the Weisfeiler-Lehman Isomorphism Test}\label{sec:wl}
The \name\ algorithm is conceptually inspired by a classic algorithm for testing graph isomorphism. 
If, in Algorithm \ref{alg:basic}, we (i) set $K=|\V|$, (ii) set the weight matrices as the identity, and (iii) use an appropriate hash function as an aggregator (with no non-linearity), then Algorithm \ref{alg:basic} is an instance of the Weisfeiler-Lehman (WL) isomorphism test, also known as ``naive vertex refinement'' \cite{shervashidze2011weisfeiler}.
If the set of representations $\{\mb{z}_v, \forall v \in \V\}$ output by Algorithm \ref{alg:basic} for two subgraphs are identical then the WL test declares the two subgraphs to be isomorphic.
This test is known to fail in some cases, but is valid for a broad class of graphs \cite{shervashidze2011weisfeiler}.
\name\ is a continuous approximation to the WL test, where we replace the hash function with trainable neural network aggregators. 
Of course, we use \name\ to generate useful node representations--not to test graph isomorphism. 
Nevertheless, the connection between \name\ and the classic WL test provides theoretical context for our algorithm design to learn the topological structure of node neighborhoods. 
\cut{
\begin{figure}
\centering
\includegraphics[width=0.95\textwidth]{FIG/struct_learn.png}
\caption{Two iterations of Algorithm 1 on two similar graphs for a toy example where only degrees are used as features (denoted by different colors) and where we use a simple set hash function as an aggregator. The circled, middle node plays a similar ``bridging'' role in both graphs. After $k=1$ iterations the representation for this central node is not unique (it is the same as the starred nodes in top graph).  After $k=2$ iterations the node has the same unique representation in both graphs.
\cut{Note that if we were to run another iteration of Algorithm \ref{alg:basic}, then the representation of the circled nodes would again become distinct; this highlights, how different depths contain different types of structural information and motivates the concatenation of depth information in Layer 1.}}
\label{fig:struct_learn}
\vspace{-10pt}
\end{figure}
}

\xhdr{Neighborhood definition}
In this work, we uniformly sample a fixed-size set of neighbors, instead of using full neighborhood sets in Algorithm \ref{alg:basic}, in order to keep the computational footprint of each batch fixed.\footnote{Exploring non-uniform samplers is an important direction for future work.}
That is, using overloaded notation, we define $\mathcal{N}(v)$ as a fixed-size, uniform draw from the set $\{u \in \V : (u,v) \in \E\}$, and we draw different uniform samples at each iteration, $k$, in Algorithm \ref{alg:basic}. 
Without this sampling the memory and expected runtime of a single batch is unpredictable and in the worst case $O(|\V|)$.\cut{,  due to the presence of extremely high-degree ``hub''-nodes, which are common in real-world graphs \cite{clauset2009power}. }
In contrast, the per-batch space and time complexity for \name\ is fixed at $O(\prod_{i=1}^{K}S_i)$, where $S_i, i \in \{1,...,K\}$ and $K$ are user-specified constants.
Practically speaking we found that our approach could achieve high performance with $K=2$ and $S_1\cdot S_2 \leq 500$ (see Section~\ref{sec:sensitivity} for details).

\subsection{Learning the parameters of \name}\label{sec:learning}  
In order to learn useful, predictive representations in a fully unsupervised setting, we apply a graph-based loss function to the output representations, $\mb{z}_u, \forall u \in {\V}$, and tune the weight matrices, $\mb{W}^k, \forall k \in \{1,...,K\}$,  and parameters of the aggregator functions via stochastic gradient descent. 
The graph-based loss function encourages nearby nodes to have similar representations, while enforcing that the representations of disparate nodes are highly distinct:
\begin{equation}\label{eq:neg}
J_{\G}(\mb{z}_u) = -\log\left(\sigma(\mb{z}^\top_u\mb{z}_{v}) \right) - Q\cdot\mathbb{E}_{v_{n} \sim P_n(v)}\log\left(\sigma(-\mb{z}^\top_u\mb{z}_{v_{n}})\right),
\end{equation}
where $v$ is a node that co-occurs near $u$ on fixed-length random walk, $\sigma$ is the sigmoid function, $P_n$ is a negative sampling distribution, and $Q$ defines the number of negative samples. 
Importantly, unlike previous embedding approaches, the representations $\mb{z}_u$ that we feed into this loss function are generated from the features contained within a node's local neighborhood, rather than training a unique embedding for each node (via an embedding look-up). 

This unsupervised setting emulates situations where node features are provided to downstream machine learning applications, as a service or in a static repository. 
In cases where representations are to be used only on a specific downstream task, the unsupervised loss (Equation \ref{eq:neg}) can simply be replaced, or augmented, by a task-specific objective (e.g., cross-entropy loss).

\subsection{Aggregator Architectures}\label{sec:aggs}

Unlike machine learning over N-D lattices (e.g., sentences, images, or 3-D volumes), a node's neighbors have no natural ordering; thus, the aggregator functions in Algorithm \ref{alg:basic} must operate over an unordered set of vectors.
Ideally, an aggregator function would be symmetric (i.e., invariant to permutations of its inputs) while still being trainable and maintaining high representational capacity. 
The symmetry property of the aggregation function ensures that our neural network model can be trained and applied to arbitrarily ordered node neighborhood feature sets.
We examined three candidate aggregator functions:

\xhdr{Mean aggregator} Our first candidate aggregator function is the mean operator, where we simply take the elementwise mean of the vectors in $\{\mb{h}_u^{k-1}, \forall u \in \mathcal{N}(v)\}$.
The mean aggregator is nearly equivalent to the convolutional propagation rule used in the transductive GCN framework \cite{kipf2016semi}.
In particular, we can derive an inductive variant of the GCN approach by replacing lines 4 and 5 in Algorithm \ref{alg:basic} with the following:\footnote{Note that this differs from Kipf et al's exact equation by a minor normalization constant \cite{kipf2016semi}.}
\begin{equation}
\mb{h}^k_v \leftarrow \sigma(\mb{W}\cdot\textsc{mean}(\{\mb{h}^{k-1}_v\} \cup \{\mb{h}_u^{k-1}, \forall u \in {\mathcal{N}(v)}\}).
\end{equation}
We call this modified mean-based aggregator {\em convolutional} since it is a rough, linear approximation of a localized spectral convolution \cite{kipf2016semi}.
An important distinction between this convolutional aggregator and our other proposed aggregators is that it does not perform the concatenation operation in line 5 of Algorithm \ref{alg:basic}---i.e., the convolutional aggregator does concatenate the node's previous layer representation $\mb{h}^{k-1}_v$ with the aggregated neighborhood vector $\mb{h}^{k}_{\mathcal{N}(v)}$. 
This concatenation can be viewed as a simple form of a ``skip connection'' \cite{he2016identity} between the different ``search depths'', or ``layers'' of the GraphSAGE algorithm, and it leads to significant gains in performance (Section \ref{sec:exp}).  

\xhdr{LSTM aggregator} We also examined a more complex aggregator based on an LSTM architecture \cite{hochreiter1997long}. 
Compared to the mean aggregator, LSTMs have the advantage of larger expressive capability. However, it is important to note that LSTMs are not inherently symmetric (i.e., they are not permutation invariant), since they process their inputs in a sequential manner.
We adapt LSTMs to operate on an unordered set by simply applying the LSTMs to a random permutation of the node's neighbors. 

\xhdr{Pooling aggregator} The final aggregator we examine is both symmetric and trainable.
In this {\em pooling} approach, each neighbor's vector is independently fed through a fully-connected neural network; following this transformation, an elementwise max-pooling operation is applied to aggregate information across the neighbor set:
\begin{equation}\label{eq:pool}
\textsc{aggregate}_k^{\textrm{pool}} = \max(\{\sigma\left(\mb{W}_{\textrm{pool}}\mb{h}^k_{u_i} + \mb{b}\right), \forall u_i \in \mathcal{N}(v)\}),
\end{equation}
where $\max$ denotes the element-wise max operator and $\sigma$ is a nonlinear activation function. 
In principle, the function applied before the max pooling can be an arbitrarily deep multi-layer perceptron, but we focus on simple single-layer architectures in this work. 
This approach is inspired by recent advancements in applying neural network architectures to learn over general point sets \cite{qi2016pointnet}.
Intuitively, the multi-layer perceptron can be thought of as a set of functions that compute features for each of the node representations in the neighbor set. By applying the max-pooling operator to each of the computed features, the model effectively captures different aspects of the neighborhood set. 
Note also that, in principle, any symmetric vector function could be used in place of the $\max$ operator (e.g., an element-wise mean). 
We found no significant difference between max- and mean-pooling in developments test and thus focused on max-pooling for the rest of our experiments.


\section{Experiments}
\label{sec:exp}

We test the performance of \name\ on three benchmark tasks: (i) classifying academic papers into different subjects using the Web of Science
citation dataset, (ii) classifying Reddit posts as belonging to different communities, and (iii) classifying protein functions across various biological protein-protein interaction (PPI) graphs. Sections \ref{sec:evolving} and \ref{sec:ppi} summarize the datasets, and the supplementary material contains additional information.
In all these experiments, we perform predictions on nodes that are not seen during training, and, in the case of the PPI dataset, we test on entirely unseen graphs. 

\xhdr{Experimental set-up}
To contextualize the empirical results on our inductive benchmarks, we compare against four baselines: 
a random classifer, a logistic regression feature-based classifier (that ignores graph structure), the DeepWalk algorithm \cite{perozzi2014deepwalk} as a representative factorization-based approach, and a
concatenation of the raw features and DeepWalk embeddings.
We also compare four variants of \name\ that use the different aggregator functions (Section \ref{sec:aggs}).
Since, the ``convolutional'' variant of \name\ is an extended, inductive version of Kipf et al's semi-supervised GCN \cite{kipf2016semi}, we term this variant \name-GCN.
We test unsupervised variants of \name\, trained according to the loss in Equation \eqref{eq:neg}, as well as supervised variants that are trained directly on classification cross-entropy loss.
For all the \name\ variants we used rectified linear units as the non-linearity and set $K=2$ with neighborhood sample sizes $S_1=25$ and $S_2=10$ (see Section \ref{sec:sensitivity} for sensitivity analyses).

For the Reddit and citation datasets, we use ``online'' training for DeepWalk as described in Perozzi et al.~\cite{perozzi2014deepwalk}, where we run a new round of SGD optimization to embed the new test nodes before making predictions (see the Appendix for details).
In the multi-graph setting, we cannot apply DeepWalk, since the embedding spaces generated by running the DeepWalk algorithm on different disjoint graphs can be arbitrarily rotated with respect to each other (Appendix \ref{sec:rotational}).

All models were implemented in TensorFlow \cite{abadi2016tensorflow} with the Adam optimizer \cite{kingma2014adam} (except DeepWalk, which performed better with the vanilla gradient descent optimizer). 
We designed our experiments with the goals of (i) verifying the improvement of \name\ over the baseline approaches (i.e., raw features and DeepWalk) and (ii) providing a rigorous comparison of the different \name\ aggregator architectures. 
In order to provide a fair comparison, all models share an identical implementation of their minibatch iterators, loss function and neighborhood sampler (when applicable). 
Moreover, in order to guard against unintentional ``hyperparameter hacking'' in the comparisons between \name\ aggregators, we sweep over the same set of hyperparameters for all \name\ variants (choosing the best setting for each variant according to performance on a validation set). 
The set of possible hyperparameter values was determined on early validation tests using subsets of the citation and Reddit data that we then discarded from our analyses. 
The appendix contains further implementation details.\footnote{Code and links to the datasets:  \url{http://snap.stanford.edu/graphsage/}}

\subsection{Inductive learning on evolving graphs: Citation and Reddit data}
\label{sec:evolving}

\begin{table}[t!]
\centering
\caption{Prediction results for the three datasets (micro-averaged F1 scores). Results for unsupervised and fully supervised \name\ are shown. Analogous trends hold for macro-averaged scores.}
\label{tab:main_results}
{\small
\begin{tabular}{lcccccc}
\toprule
&\multicolumn{2}{c}{Citation}& \multicolumn{2}{c}{Reddit} & \multicolumn{2}{c}{PPI} \\
\cmidrule(r{1em}l{1em}){2-3}  \cmidrule(r{1em}l{1em}){4-5} \cmidrule(r{1em}l{1em}){6-7}
Name & Unsup. F1 &  Sup. F1 & Unsup. F1 & Sup. F1 &  Unsup. F1 & Sup. F1\\
\midrule
Random & 0.206 & 0.206 & 0.043 & 0.042 & 0.396 & 0.396 \\
Raw features & 0.575 & 0.575 & 0.585 & 0.585 & 0.422 & 0.422 \\
DeepWalk & 0.565 &  0.565 & 0.324 & 0.324 &  --- & --- \\\vspace{3pt}
DeepWalk + features & 0.701 &   0.701 & 0.691 & 0.691 & --- & --- \\
\name-GCN & 0.742 & 0.772 & {\bf 0.908} &  0.930 & 0.465 &  0.500\\
\name-mean & 0.778 & 0.820 & 0.897 &  0.950 & 0.486 & 0.598 \\
\name-LSTM & 0.788 & 0.832 & {\bf 0.907} & {\bf 0.954} & 0.482 &  {\bf 0.612}\\
\name-pool & {\bf 0.798} & {\bf 0.839} & 0.892 & 0.948 &  {\bf 0.502} & 0.600\\
\midrule
\% gain over feat. & 39\% & 46\% & 55\% & 63\% & 19\% & 45\%\\
\bottomrule
\end{tabular}
}
\end{table}
\begin{figure}
\vspace{-10pt}
\begin{subfigure}[t]{0.48\textwidth}
\centering
\includegraphics[width=0.99\textwidth]{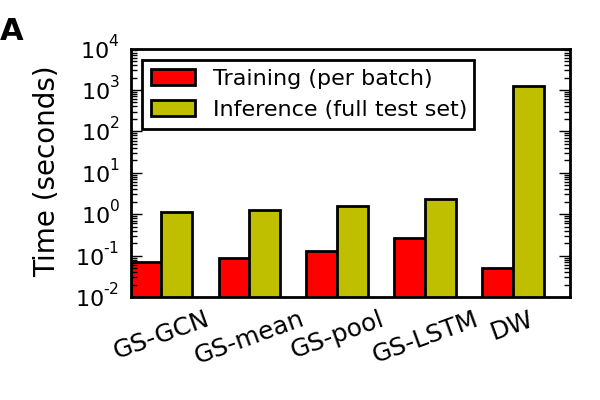}
\end{subfigure}
\begin{subfigure}[t]{0.52\textwidth}
\centering
\includegraphics[width=0.99\textwidth]{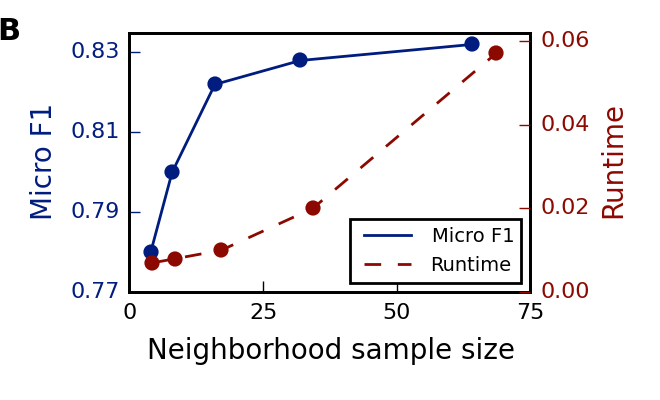}
\end{subfigure}
\caption{\textbf{A}: Timing experiments on Reddit data, with training batches of size 512 and inference on the full test set (79,534 nodes). \textbf{B}: Model performance with respect to the size of the sampled neighborhood, where the ``neighborhood sample size'' refers to the number of neighbors sampled at each depth for $K=2$ with  $S_1=S_2$ (on the citation data using \name-mean). }
\label{fig:time_and_noise}
\end{figure}

Our first two experiments are on classifying nodes in evolving information graphs, a task that is especially relevant to high-throughput production systems, which constantly encounter unseen data. 

\xhdr{Citation data}
Our first task is predicting paper subject categories on a large citation dataset. 
We use an undirected citation graph dataset derived from the Thomson Reuters Web of Science Core Collection, corresponding to all papers in six biology-related fields for the years 2000-2005.
The node labels for this dataset correspond to the six different field labels. 
In total, this is dataset contains 302,424 nodes with an average degree of 9.15.
We train all the algorithms on the 2000-2004 data and use the 2005 data for testing (with 30\% used for validation). 
For features, we used node degrees and processed the paper abstracts according Arora et al.'s~\cite{arora2017simple} sentence embedding approach, with 300-dimensional word vectors trained using the GenSim word2vec implementation \cite{rehurek_lrec}.

\xhdr{Reddit data}
In our second task, we predict which community different Reddit posts belong to. 
Reddit is a large online discussion forum where users post and comment on content in different topical communities. 
We constructed a graph dataset from Reddit posts made in the month of September, 2014. 
The node label in this case is the community, or ``subreddit'', that a post belongs to.  
We sampled 50 large communities and built a post-to-post graph, connecting posts if the same user comments on both. 
In total this dataset contains 232,965 posts with an average degree of 492.
We use the first 20 days for training and the remaining days for testing (with 30\% used for validation).  
For features, we use off-the-shelf 300-dimensional GloVe CommonCrawl word vectors \cite{pennington2014glove}; for each post, we concatenated (i) the average embedding of the post title, (ii) the average embedding of all the post's comments (iii) the post's score, and (iv) the number of comments made on the post. 

The first four columns of Table \ref{tab:main_results} summarize the performance of \name\, as well as the baseline approaches on these two datasets.
We find that \name\ outperforms all the baselines by a significant margin, and the trainable, neural network aggregators provide significant gains compared to the GCN approach. 
For example, the unsupervised variant \name-pool outperforms the concatenation of the DeepWalk embeddings and the raw features by 13.8\% on the citation data and 29.1\% on the Reddit data, while the supervised version provides a gain of 19.7\% and 37.2\%, respectively. 
Interestingly, the LSTM based aggregator shows strong performance, despite the fact that it is designed for sequential data and not unordered sets. 
Lastly, we see that the performance of unsupervised \name\ is reasonably competitive with the fully supervised version, indicating that our framework can achieve strong performance without task-specific fine-tuning. 
 
\subsection{Generalizing across graphs: Protein-protein interactions}\label{sec:ppi}

We now consider the task of generalizing across graphs, which requires learning about node roles rather than community structure. 
We classify protein roles---in terms of their cellular functions from gene ontology---in various
protein-protein interaction (PPI) graphs, with each graph corresponding to a different human tissue \cite{zitnik2017tissue}.
We use positional gene sets, motif gene sets and immunological signatures as features and gene
ontology sets as labels (121 in total), collected from the Molecular Signatures Database \cite{subramanian2005gene}.
The average graph contains 2373 nodes, with an average degree of 28.8.
We train all algorithms on 20 graphs and then average prediction F1 scores on two test graphs (with two other graphs used for validation). 

The final two columns of Table \ref{tab:main_results} summarize the accuracies of the various approaches on this data.
Again we see that \name\ significantly outperforms the baseline approaches, with the LSTM- and pooling-based aggregators providing substantial gains over the mean- and GCN-based aggregators.\footnote{Note that in very recent follow-up work Chen and Zhu \cite{chen2017stochastic} achieve superior performance by optimizing the \name\ hyperparameters specifically for the PPI task and implementing new training techniques (e.g., dropout, layer normalization, and a new sampling scheme). We refer the reader to their work for the current state-of-the-art numbers on the PPI dataset that are possible using a variant of the \name\ approach.}

\subsection{Runtime and parameter sensitivity}\label{sec:sensitivity}

Figure \ref{fig:time_and_noise}.A summarizes the training and test runtimes for the different approaches. 
The training time for the methods are comparable (with \name-LSTM being the slowest). 
However, the need to sample new random walks and run new rounds of SGD to embed unseen nodes makes DeepWalk $100\text{-}500\times$ slower at test time. 

For the \name\ variants, we found that setting $K=2$ provided a consistent boost in accuracy of around $10\text{-}15\%$, on average, compared
to $K=1$; however, increasing $K$ beyond 2 gave marginal returns in performance ($0\text{-}5\%$)
while increasing the runtime by a prohibitively large factor of $10\text{-}100{\times}$, depending
on the neighborhood sample size.
We also found diminishing returns for sampling large neighborhoods (Figure
\ref{fig:time_and_noise}.B).
Thus, despite
the higher variance induced by sub-sampling neighborhoods, \name\ is still able to maintain strong predictive accuracy, while significantly improving the runtime. 

\subsection{Summary comparison between the different aggregator architectures}\label{sec:sensitivity}

Overall, we found that the LSTM- and pool-based aggregators performed the best, in terms of both average performance and number of experimental settings where they were the top-performing method (Table \ref{tab:main_results}).
To give more quantitative insight into these trends, we consider each of the six different experimental settings (i.e., $\textrm{(3 datasets)} \times (\textrm{unsupervised vs.\ supervised})$) as trials and consider what performance trends are likely to generalize.
In particular, we use the non-parametric Wilcoxon Signed-Rank Test \cite{siegal1956nonparametric} to quantify the differences between the different aggregators across trials, reporting the $T$-statistic and $p$-value where applicable. 
Note that this method is rank-based and essentially tests whether we would expect one particular approach to outperform another in a new experimental setting.
Given our small sample size of only 6 different settings, this significance test is somewhat underpowered; nonetheless, the $T$-statistic and associated $p$-values are useful quantitative measures to assess the aggregators' relative performances. 

We see that LSTM-, pool- and mean-based aggregators all provide statistically significant gains over the GCN-based approach ($T=1.0$, $p=0.02$ for all three). 
However, the gains of the LSTM and pool approaches over the mean-based aggregator are more marginal ($T=1.5$, $p=0.03$, comparing LSTM to mean; $T=4.5$, $p=0.10$, comparing pool to mean).
There is no significant difference between the LSTM and pool approaches ($T=10.0$, $p=0.46$). 
However, \name-LSTM is significantly slower than \name-pool (by a factor of ${\approx}2{\times}$), perhaps giving the pooling-based aggregator a slight edge overall.

\section{Theoretical analysis}\label{sec:theory}

In this section, we probe the expressive capabilities of \name\ in order to 
provide insight into how \name\ can learn about graph structure, even though it is inherently based on features. 
As a case-study, we consider whether \name\ can learn to predict the clustering coefficient of a node, i.e., the proportion of triangles that are closed within the node's 1-hop neighborhood \cite{watts1998collective}.  The clustering coefficient is a popular  measure of how clustered a node's local neighborhood is, and it serves as a building block for many more complicated structural motifs \cite{benson2016higher}. We can show that Algorithm \ref{alg:basic} is capable of approximating clustering coefficients to an arbitrary degree of precision:

\begin{theorem}\label{thm:main}
Let $\mb{x}_v \in U, \forall v \in \mathcal{V}$ denote the feature inputs for Algorithm \ref{alg:basic} on graph $\mathcal{G}=(\mathcal{V}, \mathcal{E})$, where $U$ is any compact subset of $\mathbb{R}^d$. Suppose that there exists a fixed positive constant $C \in \mathbb{R}^+$ such that $\|\mb{x}_v  - \mb{x}_{v'}\|_2 > C$ for all pairs of nodes. Then we have that $\forall \epsilon > 0$ there exists a parameter setting $\mb{\Theta}^*$ for Algorithm \ref{alg:basic} such that after $K=4$ iterations
$$|{z}_v - c_v| < \epsilon , \forall v \in \mathcal{\V},$$ where $z_v \in \mathbb{R}$ are final output values generated by Algorithm 1 and $c_v$ are node clustering coefficients.\cut{, as defined in \cite{watts1998collective}.}
\end{theorem}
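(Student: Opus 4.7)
The plan is to prove the theorem by constructing an explicit parameter setting $\mb{\Theta}^*$ layer-by-layer, using the universal approximation property of neural networks (Hornik et al.) together with the compactness of $U$ and the $C$-separation of the feature inputs to turn continuous feature identifiers into effectively discrete ones. The essential observation is that the clustering coefficient of $v$ is a function of the subgraph induced on $\{v\}\cup \mathcal{N}(v)\cup\mathcal{N}(\mathcal{N}(v))$: specifically $c_v = |\{(u,w)\in\mathcal{E}:u,w\in\mathcal{N}(v)\}|/\binom{|\mathcal{N}(v)|}{2}$. So it suffices to show that after $K=4$ iterations $h^K_v$ encodes enough of that local subgraph to read off $c_v$ by an MLP.

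First I would handle the ``identifier'' step. Since $U$ is compact and $\|\mb{x}_v-\mb{x}_{v'}\|_2>C$ for all distinct $v,v'$, there is a continuous injection from the (finite) set of feature vectors to $\mathbb{R}$, and hence a single-layer perceptron that maps each $\mb{x}_v$ to a value $\phi(v)$ such that $|\phi(v)-\phi(v')|>C'$ for some $C'>0$ and all $v\neq v'$. Plugging this into the $k=1$ update of Algorithm~\ref{alg:basic} produces labels that separate the nodes, so $h^1_v$ can be taken as a distinguishable ``name'' for $v$. Next, using the universal approximation property of the pooling or mean aggregator composed with an MLP over the compact set of possible inputs, I would construct the $k=2$ parameters so that $h^2_v$ approximates, within a prescribed tolerance $\epsilon_2$, an injective encoding of the pair $(\phi(v),\{\phi(u):u\in\mathcal{N}(v)\})$. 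The $C'$-separation makes the target function continuous (indeed Lipschitz) on the compact domain of possible multisets, so such an approximation exists.

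Layers $k=3$ and $k=4$ then carry out the ``triangle counting''. At $k=3$, $v$ aggregates the $h^2_u$ for $u\in\mathcal{N}(v)$, which carry $u$'s neighbor list, and concatenates with its own neighbor list in $h^2_v$; an MLP can approximate the continuous function that, for each $u\in\mathcal{N}(v)$, counts how many entries of $\{\phi(w):w\in\mathcal{N}(u)\}$ are within distance $C'/3$ of entries in $\{\phi(w):w\in\mathcal{N}(v)\}$, which (by separation) is exactly $|\mathcal{N}(u)\cap\mathcal{N}(v)|$, i.e., the number of triangles through the edge $uv$. Summing over $u\in\mathcal{N}(v)$ and dividing by $|\mathcal{N}(v)|(|\mathcal{N}(v)|-1)$ gives $c_v$; this final continuous, bounded arithmetic is implemented at $k=4$ by another MLP and the output projection, again via universal approximation.

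The main obstacle is twofold. First, the aggregator operates on multisets of real vectors, so I have to argue that the composition ``aggregate $+$ MLP'' is a universal approximator for continuous functions on those multisets restricted to the (finite, hence compact) set of possibilities arising from $\mathcal{G}$; this is where the Deep-Sets-style argument is invoked, and it is essential that the domain is compact so that indicator-like quantities (``is $\phi(w)$ equal to $\phi(w')$?'') can be realized by a genuinely continuous function whose Lipschitz constant is finite thanks to the $C'$-gap. Second, the approximation errors introduced at each of the four layers must be propagated carefully: one chooses $\epsilon_1,\epsilon_2,\epsilon_3$ small enough (depending on $C$, the maximum degree of $\mathcal{G}$, and the Lipschitz constants of the target functions) that the final error $|z_v-c_v|$ is below the prescribed $\epsilon$. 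Aside from bookkeeping, this error propagation is the only delicate part; everything else reduces to standard universal approximation.
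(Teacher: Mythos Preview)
Your proposal is correct in outline and reaches the result, but it takes a more abstract route than the paper. The paper's proof is explicitly constructive: at depth $k=1$ it uses a pooling layer to map every node to a \emph{one-hot indicator vector} in $\mathcal{E}_I^{\chi(\mathcal{G}^4)}$ (Lemma~3, built from Lemmas~1--2 and Hornik's theorem), so that at $k=2$ a simple sum-aggregate produces the node's row $\mathbf{A}_v$ of the local adjacency matrix, and at $k=3$ summing again yields $\sum_{u\in\mathcal{N}(v)}\mathbf{A}_u$. The triangle count is then the closed-form expression $\mathbf{b}^\top\mathbf{c}-d_v$ with $\mathbf{b}=\mathbf{A}_v$ and $\mathbf{c}=\sum_{u}\mathbf{A}_u$, and $k=4$ is just an MLP evaluating $2(\mathbf{b}^\top\mathbf{c}-d_v)/(d_v(d_v-1))$. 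By contrast, you keep scalar identifiers $\phi(v)$ and rely on the symmetric universal approximation theorem (PointNet/Deep-Sets) at \emph{every} layer to argue that an injective multiset encoding, and then the intersection-counting function, can be realized. This is valid but buys less: you get no dimension bound (the paper gets $\chi(\mathcal{G}^4)$), and your $k=3$ description is slightly loose---the MLP does not ``loop over $u\in\mathcal{N}(v)$'' but must recover the total triangle count from a single aggregated vector, so you are implicitly requiring the $k=3$ aggregate to be injective on the relevant multisets and pushing all the arithmetic into the subsequent MLP. The paper's one-hot trick sidesteps this by making the needed symmetric functions (sums of indicator vectors) exact and linear, so that only the first and last layers invoke approximation theorems.
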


Theorem 1 states that for any graph there exists a parameter setting for Algorithm 1 such that it can approximate clustering coefficients in that graph to an arbitrary precision, if the features for every node are distinct (and if the model is sufficiently high-dimensional). 
The full proof of Theorem 1 is in the Appendix. 
Note that as a corollary of Theorem 1, \name\ can learn about local graph structure, even when the node feature inputs are sampled from an absolutely continuous random distribution (see the Appendix for details).
The basic idea behind the proof is that if each node has a unique feature representation, then we can learn to map nodes to indicator vectors and identify node neighborhoods.
The proof of Theorem 1 relies on some properties of the pooling aggregator, which also provides insight into why \name-pool outperforms the GCN and mean-based aggregators. 


\section{Conclusion}

We introduced a novel approach that allows embeddings to be efficiently generated for unseen nodes. 
\name\ consistently outperforms state-of-the-art baselines, effectively trades off performance and runtime by sampling node neighborhoods, and our theoretical analysis provides insight into how our approach can learn about local graph structures. 
A number of extensions and potential improvements are possible, such as extending \name\ to incorporate directed or multi-modal graphs. 
A particularly interesting direction for future work is exploring non-uniform neighborhood sampling functions, and perhaps even learning these functions as part of the \name\ optimization.

\vspace{10pt}
\subsection*{Acknowledgments}
\vspace{-5pt}
The authors thank Austin Benson, Aditya Grover, Bryan He, Dan Jurafsky, Alex Ratner, Marinka Zitnik, and Daniel Selsam for their helpful discussions and comments on early drafts. 
The authors would also like to thank Ben Johnson for his many useful questions and comments on our code and Nikhil Mehta and Yuhui Ding for catching some minor errors in a previous version of the appendix. 
This research has been supported in part by NSF IIS-1149837, DARPA SIMPLEX,
Stanford Data Science Initiative, Huawei, and Chan Zuckerberg Biohub.
WLH was also supported by the SAP Stanford Graduate Fellowship and an NSERC PGS-D grant. 
The views and conclusions expressed in this material are those of the authors
and should not be interpreted as necessarily representing the official policies or endorsements, either expressed or
implied, of the above funding agencies, corporations, or the U.S. and Canadian governments. 
\newpage
\bibliographystyle{abbrv}
\bibliography{refs}
\newpage
\section*{{\Large Appendices}}
\appendix

\section{Minibatch pseudocode}\label{sec:minibatch}

In order to use stochastic gradient descent, we adapt our algorithm to allow forward and backward
propagation for minibatches of nodes and edges. 
Here we focus on the minibatch forward propagation
algorithm, analogous to Algorithm \ref{alg:basic}.
In the forward propagation of \name\, the minibatch $\mathcal{B}$ contains nodes that we want to generate representations for. 
Algorithm 2 gives the pseudocode for the minibatch approach. 

\setcounter{algocf}{1}
\begin{algorithm}
\caption{\name\ minibatch forward propagation algorithm}
	\SetKwInOut{Input}{Input}\SetKwInOut{Output}{Output}
    \Input{~Graph $\G(\V,\E)$;\\
     input features $\{\mb{x}_v, \forall v\in \mathcal{B}\}$; \\
     depth $K$; weight matrices $\mb{W}^{k}, \forall k \in \{1,...,K\}$; \\
     non-linearity $\sigma$; \\
     differentiable aggregator functions $\textsc{aggregate}_k, \forall k \in \{1,...,K\}$;\\
      neighborhood sampling functions, $\mathcal{N}_k : v \rightarrow 2^{\V}, \forall k \in \{1,...,K\}$}
    \Output{~Vector representations $\mb{z}_v$ for all $v \in \mathcal{B}$}
    \BlankLine
         $\mathcal{B}^K \leftarrow \mathcal{B}$\;
        \For{$k=K...1$}{
            $B^{k-1} \leftarrow \mathcal{B}^{k}$ \;
            \For{$u \in \mathcal{B}^{k}$}{
                $\mathcal{B}^{k-1} \leftarrow \mathcal{B}^{k-1} \cup \mathcal{N}_k(u)$\;
            }
    	}
        $\mb{h}^0_u \leftarrow \mb{x}_v, \forall v \in \mathcal{B}^0$ \;
        \For{$k=1...K$}{
            \For{$u \in \mathcal{B}^k$}{
    	        $\mb{h}^{k}_{\mathcal{N}(u)} \leftarrow \textsc{aggregate}_k(\{\mb{h}_{u'}^{k-1},
                    \forall u' \in \mathcal{N}_k(u)\})$\;
    	        $\mb{h}^k_u \leftarrow \sigma\left(\mb{W}^{k}\cdot\textsc{concat}(\mb{h}_u^{k-1}, \mb{h}^{k}_{\mathcal{N}(u)})\right)$\;
    	         $\mb{h}^{k}_u\leftarrow \mb{h}^{k}_u/ \|\mb{h}^{k}_u\|_2$\;
            }
        }
     $\mb{z}_u\leftarrow \mb{h}^{K}_u, \forall u \in \mathcal{B}$ 
\end{algorithm}

The main idea is to sample all the nodes needed for the computation first. 
Lines 2-7 of Algorithm 2 correspond to the sampling stage. 
Each set $\mathcal{B}^k$ contains the nodes that are needed to compute the representations of
nodes $v \in \mathcal{B}^{k+1}$, i.e., the nodes in the $(k+1)$-st iteration, or ``layer'', of Algorithm 1.
Lines 9-15 correspond to the aggregation stage, which is almost identical to the batch inference
algorithm. 
Note that in Lines 12 and 13, the representation at iteration $k$ of any node in set $\mathcal{B}^k$ can be computed, because its representation at iteration $k-1$  and the representations of its sampled neighbors at iteration $k-1$ have already been computed
in the previous loop.
The algorithm thus avoids computing the representations for nodes that are not in the current
minibatch and not used during the current iteration of stochastic gradient descent.
We use the notation $\mathcal{N}_k(u)$ to denote a deterministic function which specifies a random sample of a node's neighborhood (i.e., the randomness is assumed to be pre-computed in the mappings).
We index this function by $k$ to denote the fact that the random samples are independent across iterations over $k$.
We use a uniform sampling function in this work and sample with replacement in cases where the sample size is larger than the node's degree. 

Note that the sampling process in Algorithm 2 is conceptually reversed compared to the iterations over $k$ in Algorithm \ref{alg:basic}: we start with the ``layer-K'' nodes (i.e., the nodes in $\mathcal{B}$) that we want to generate representations for; then we sample their neighbors (i.e., the nodes at ``layer-K-1'' of the algorithm) and so on.
One consequence of this is that the definition of neighborhood sampling sizes can be somewhat counterintuitive.
In particular, if we use $K=2$ total iterations with sample sizes $S_1$ and $S_2$ then this means that we sample $S_1$ nodes during iteration $k=1$ of Algorithm 1 and $S_2$ nodes during iteration $k=2$, and---from the perspective of the ``target'' nodes in $\mathcal{B}$ that we want to generate representations for after iteration $k=2$---this amounts to sampling $S_2$ of their immediate neighbors and $S_1\cdot S_2$ of their 2-hop neighbors.

\section{Additional Dataset Details}

In this section, we provide some additional, relevant dataset details.
The full PPI and Reddit datasets are available at: \url{http://snap.stanford.edu/graphsage/}.
The Web of Science dataset (WoS) is licensed by Thomson Reuters and can be made available to groups with valid WoS licenses. 

\paragraph{Reddit data}

To sample communities, we ranked communities by their total number of comments in 2014 and selected the communities with ranks [11,50] (inclusive).
We omitted the largest communities because they are large, generic default communities that substantially skew the class distribution.  
We selected the largest connected component of the graph defined over the union of these communities. 
We performed early validation experiments and model development on data from October and November, 2014. 

Details on the source of the Reddit data are at: \url{https://archive.org/details/FullRedditSubmissionCorpus2006ThruAugust2015} and
\url{https://archive.org/details/2015_reddit_comments_corpus}.

\paragraph{WoS data}

We selected the following subfields manually, based on them being of relatively equal size and all biology-related fields.
We performed early validation and model development on the neuroscience subfield (code=RU, which is excluded from our final set). 
We did not run any experiments on any other subsets of the WoS data. 
We took the largest connected component of the graph defined over the union of these fields. 
\begin{itemize}
\item
Immunology (code: NI, number of documents: 77356)
\item
Ecology (code: GU, number of documents: 37935)
\item
Biophysics (code: DA, number of documents: 36688)
\item
Endocrinology and Metabolism (code: IA, number of documents: 52225). 
\item
Cell Biology (code: DR, number of documents: 84231)
\item
Biology (other) (code: CU, number of documents: 13988)
\end{itemize}

\paragraph{PPI Tissue Data}

For training, we randomly selected 20 PPI networks that had at least 15,000 edges. For testing and validation, we selected 4 large networks (2 for validation, 2 for testing, each with at least 35,000 edges). All experiments for model design and development were performed on the same 2 validation networks, and we used the same random training set in all experiments.

We selected features that included at least 10\% of the proteins that appear in any of the PPI graphs. 
Note that the feature data is very sparse for dataset ($42\%$ of nodes have no non-zero feature values), which makes leveraging neighborhood information critical. 

\section{Details on the Experimental Setup and Hyperparameter Tuning}\label{sec:hyper}

\paragraph{Random walks for the unsupervised objective}
For all settings, we ran 50 random walks of length 5 from each node in order to obtain the pairs needed for the unsupervised loss (Equation \ref{eq:neg}).
Our implementation of the random walks is in pure Python and is based directly on Python code provided by Perozzi et al.~\cite{perozzi2014deepwalk}.

\paragraph{Logistic regression model}
For the feature only model and to make predictions on the embeddings output from the unsupervised models, 
we used the logistic SGDClassifier from the scikit-learn Python package \cite{scikit-learn}, with all default settings.
Note that this model is always optimized only on the training nodes and it is not fine-tuned on the embeddings that are generated for the test data.

\paragraph{Hyperparameter selection}

In all settings, we performed hyperparameter selection on the learning rate and the model dimension.
With the exception of DeepWalk, we performed a parameter sweep on initial learning rates $\{0.01, 0.001, 0.0001\}$ for the supervised models and $\{2\times10^{-6}, 2\times10^{-7}, 2\times10^{-8}\}$ for the unsupervised models.\footnote{Note that these values differ from our previous reported pre-print values because they are corrected to account for an extraneous normalization by the batch size. We thank Ben Johnson for pointing out this discrepancy.}
When applicable, we tested a ``big'' and ``small'' version of each model, where we tried to keep the overall model sizes comparable.
For the pooling aggregator, the ``big'' model had a pooling dimension of 1024, while the ``small'' model had a dimension of  512.
For the LSTM aggregator, the ``big'' model had a hidden dimension of 256, while the ``small'' model had a hidden dimension of 128; note that the actual parameter count for the LSTM is roughly $4{\times}$ this number, due to weights for the different gates. 
In all experiments and for all models we specify the output dimension of the $\mb{h}^k_i$ vectors at every depth $k$ of the recursion to be $256$. 
All models use rectified linear units as a non-linear activation function.
All the unsupervised \name\ models and DeepWalk used 20 negative samples with context distribution smoothing over node degrees using a smoothing parameter of $0.75$, following \cite{grover2016node2vec,mikolov2013distributed,perozzi2014deepwalk}.
Initial experiments revealed that DeepWalk performed much better with large learning rates, so we swept over rates in the set $\{0.2,0.4,0.8\}$. 
For the supervised \name\ methods, we ran 10 epochs for all models. 
All methods except DeepWalk use batch sizes of 512.
We found that DeepWalk achieved faster wall-clock convergence with a smaller batch size of 64. 

\paragraph{Hardware}
Except for DeepWalk, we ran experiments single a machine with 4 NVIDIA Titan X Pascal GPUs (12Gb of RAM at 10Gbps speed), 16 Intel Xeon CPUs (E5-2623 v4 @ 2.60GHz), and 256Gb of RAM.  
DeepWalk was faster on a CPU intensive machine with 144 Intel Xeon CPUs (E7-8890 v3 @ 2.50GHz) and 2Tb of RAM. 
Overall, our experiments took about 3 days in a shared resource setting. 
We expect that a consumer-grade single-GPU machine (e.g., with a Titan X GPU) could complete our full set of experiments in 4-7 days, if its full resources were dedicated. 

\paragraph{Notes on the DeepWalk implementation}
Existing DeepWalk implementations \cite{perozzi2014deepwalk,grover2016node2vec} are simply wrappers around dedicated word2vec code, and they do not easily support embedding new nodes and other variations. 
Moreover, this makes it difficult to compare runtimes and other statistics for these approaches.
For this reason, we reimplemented DeepWalk in pure TensorFlow, using the vector initializations etc that are described in the TensorFlow word2vec tutorial.\footnote{\url{https://github.com/tensorflow/models/blob/master/tutorials/embedding/word2vec.py}}

We found that DeepWalk was much slower to converge than the other methods, and since it is 2-5X faster at training, we gave it 5 passes over the random walk data, instead of one.
To update the DeepWalk method on new data, we ran 50 random walks of length 5 (as described above) and performed updates on the embeddings for the new nodes while holding the already trained embeddings fixed. 
We also tested two variants, one where we restricted the sampled random walk ``context nodes'' to only be from the set of already trained nodes (which alleviates statistical drift) and an approach without this restriction. 
We always selected the better performing variant. 
Note that despite DeepWalk's poor performance on the inductive task, it is far more competitive when tested in the transductive setting, where it can be extensively trained on a single, fixed graph. 
(That said, Kipf et al \cite{kipf2016semi}\cite{kipf2016variational} found that GCN-based approach consistently outperformed DeepWalk, even in the transductive setting on link prediction, a task that theoretically favors DeepWalk.)
We did observe DeepWalk's performance {\em could} improve with further training, and in some cases it could become competitive with the unsupervised \name\ approaches (but not the supervised approaches) if we let it run for ${>}1000{\times}$ longer than the other approaches (in terms of wall clock time for prediction on the test set); however, we did not deem this to be a meaningful comparison for the inductive task. 

Note that DeepWalk is also equivalent to the node2vec model \cite{grover2016node2vec} with $p=q=1$. 

\paragraph{Notes on neighborhood sampling}
Due to the heavy-tailed nature of degree distributions we downsample the edges in all graphs before feeding them into the \name\ algorithm.
In particular, we subsample edges so that no node has degree larger than $128$. 
Since we only sample at most $25$ neighbors per node, this is a reasonable tradeoff.
This downsampling allows us to store neighborhood information as dense adjacency lists, which drastically improves computational efficiency. 
For the Reddit data we also downsampled the edges of the original graph as a pre-processing step, since the original graph is extremely dense. 
All experiments are on the downsampled version, but we release the full version on the project website for reference.

\section{Alignment Issues and Orthogonal Invariance for DeepWalk and Related Approaches}\label{sec:rotational}

DeepWalk \cite{perozzi2014deepwalk}, node2vec \cite{grover2016node2vec}, and other recent successful node embedding approaches employ objective functions of the form:
\begin{equation}\label{eq:n2vobj}
\alpha\sum_{i,j \in \mathcal{A}}f(\mb{z}_i^\top\mb{z}_{j}) + \beta\sum_{i,j \in \mathcal{B}}g(\mb{z}_i^\top\mb{z}_{j}) 
\end{equation}
where $f$, $g$ are smooth, continuous functions, $\mb{z}_i$ are the node representations that are being directly optimized (i.e., via embedding look-ups), and $\mathcal{A}, \mathcal{B}$ are sets of pairs of nodes. 
Note that in many cases, in the actual code implementations used by the authors of these approaches, nodes are associated with two unique embedding vectors and the arguments to the dot products in $f$ and $g$ are drawn for distinct embedding look-ups (e.g., \cite{grover2016node2vec,perozzi2014deepwalk}); however, this does not fundamentally alter the learning algorithm. 
The majority of approaches also normalize the learned embeddings to unit length, so we assume this post-processing as well.

By connection to word embedding approaches and the arguments of \cite{levy2014neural}, these approaches can also be viewed as stochastic, implicit matrix factorizations where we are trying to learn a matrix $\mb{Z} \in \mathbb{R}^{|\V| \times d}$ such that
\begin{equation}
\mb{Z}\mb{Z}^\top \approx \mb{M},
\end{equation}
where $\mb{M}$ is some matrix containing random walk statistics. 

An important consequence of this structure is that the embeddings can be rotated by an arbitrary orthogonal matrix, without impacting the objective:
\begin{equation}
\mb{Z}\mb{Q}^\top\mb{Q}\mb{Z}^\top = \mb{Z}\mb{Z}^\top,
\end{equation}
where $\mb{Q} \in \mathbb{R}^{d\times d}$ is any orthogonal matrix. 
Since the embeddings are otherwise unconstrained and the only error signal comes from the orthogonally-invariant objective \eqref{eq:n2vobj}, the entire embedding space is free to arbitrarily rotate during training.  

Two clear consequences of this are:
\begin{enumerate}
\item
	Suppose we run an embedding approach based on \eqref{eq:n2vobj} on two separate graphs A and B using the same output dimension. Without some explicit penalty enforcing alignment, the learned embeddings spaces for the two graphs will be arbitrarily rotated with respect to each other after training. 
	Thus, for any node classification method that is trained on individual embeddings from graph A, inputting the embeddings from graph B will be essentially random. 
	This fact is also simply true by virtue of the fact that the $\mb{M}$ matrices of these graphs are completely disjoint. 
	Of course, if we had a way to match ``similar'' nodes between the graphs, then it could be possible to use an alignment procedure to share information between the graphs, such as the procedure proposed by \cite{hamilton2016diachronic} for aligning the output of word embedding algorithms. 
	Investigating such alignment procedures is an interesting direction for future work; though these approaches will inevitably be slow run on new data, compared to approaches like \name\ that can simply generate embeddings for new nodes without any additional training or alignment. 
\item
  Suppose that we run an embedding approach based on \eqref{eq:n2vobj} on graph C at time $t$ and train a classifier on the learned embeddings. Then at time $t+1$ we add more nodes to C and run a new round of SGD and update all embeddings.
  Two issues arise: First by analogy to point 1 above, if the new nodes are only connected to a very small number of the old nodes, then the embedding space for the new nodes can essentially become rotated with respect to the original embedding space.
  Moreover, if we update all embeddings during training (not just for the new nodes), as suggested by \cite{perozzi2014deepwalk}'s streaming approach to DeepWalk, then the embedding space can arbitrarily rotate compared to the embedding space that we trained our classifier on, which only further exasperates the problem.  
\end{enumerate}

Note that this rotational invariance is not problematic for tasks that only rely on pairwise node distances (e.g., link prediction via dot products). 
Moreover, some reasonable approaches to alleviate this issue of statistical drift are to (1) not update the already trained embeddings when optimizing the embeddings for new test nodes and (2) to only keep existing nodes as ``context nodes'' in the sampled random walks, i.e. to ensure that every dot-product in the skip-gram objective is the product of an already-trained node and a new/test node.
We tried both of these approaches in this work and always selected the best performing DeepWalk variant. 

Also note that empirically DeepWalk performs better on the citation data than the Reddit data (Section \ref{sec:evolving}) because this statistical drift is worse in the Reddit data, compared to the citation graph.
In particular, the Reddit data has fewer edges from the test set to the train set, which help prevent mis-alignment: 96\% of the 2005 citation links connect back to the 2000-2004 data, while only 73\% of edges in the Reddit test set connect back to the train data. 

\section{Proof of Theorem 1}

\newcommand{\R}{\mathbb{R}}
To prove Theorem 1, we first prove three lemmas:
\begin{itemize}
\item
Lemma 1 states that there exists a continuous function that is guaranteed to only be positive in closed balls around a fixed number of points, with some noise tolerance. 
\item
Lemma 2 notes that we can approximate the function in Lemma 1 to an arbitrary precision using a multilayer perceptron with a single hidden layer.
\item
Lemma 3 builds off the preceding two lemmas to prove that the pooling architecture can learn to map nodes to unique indicator vectors, assuming that all the input feature vectors are sufficiently distinct. 
\end{itemize}
We also rely on fact that the max-pooling operator (with at least one hidden layer) is capable of approximating any Hausdorff continuous, symmetric function to an arbitrary $\epsilon$ precision \cite{qi2016pointnet}.

We note that all of the following are essentially {\em identifiability} arguments. 
We show that there exists a parameter setting for which Algorithm \ref{alg:basic} can learn nodes clustering coefficients, which is non-obvious given that it operates by aggregating feature information.
The {\em efficient learnability} of the functions described is the subject of future work. 
We also note that these proofs are conservative in the sense that clustering coefficients may be in fact identifiable in fewer iterations, or with less restrictions, than we impose. 
Moreover, due to our reliance on two universal approximation theorems \cite{hornik1991approximation,qi2016pointnet}, the required dimensionality is in principle $O(|\V|)$.
We can provide a more informative bound on the required output dimension of some particular layers (e..g., Lemma 3); however, in the worst case this identifiability argument relies on having a dimension of $O(|\V|)$. 
It is worth noting, however,  that Kipf et al's ``featureless'' GCN approach has parameter dimension $O(|\V|)$, so this requirement is not entirely unreasonable \cite{kipf2016semi,kipf2016variational}.

Following Theorem 1, we let $\mb{x}_v \in U, \forall v \in \V$ denote the feature inputs for Algorithm \ref{alg:basic} on graph $\mathcal{G}=(\mathcal{V}, \mathcal{E})$, where $U$ is any compact subset of $\mathbb{R}^d$.

\begin{lemma}
Let $C \in \mathbb{R}^+$ be a fixed positive constant. 
Then for any non-empty finite subset of nodes $\mathcal{D} \subseteq \mathcal{V}$, there exists a continuous function $g : U \rightarrow \R$ such that 
\begin{equation}
\begin{cases}
g(\mb{x}) > \epsilon,& \textrm{if $\|\mb{x}-\mb{x}_v\|_2 = 0$ for some $v \in \mathcal{D}$}\\
g(\mb{x}) \leq {-}\epsilon,&\textrm{if $\|\mb{x}-\mb{x}_v\|_2 > C, \forall v \in \mathcal{D}$},
\end{cases}
\end{equation}
where $\epsilon < 0.5$ is a chosen error tolerance. 
\end{lemma}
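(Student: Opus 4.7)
The plan is to build $g$ explicitly as a finite maximum of radial bump functions centered at the target feature vectors $\{\mb{x}_v : v \in \mathcal{D}\}$. First I would construct a single scalar profile $h : [0,\infty) \to \mathbb{R}$ that is continuous, satisfies $h(0) = 1$, decays linearly on $[0, C]$, and is identically $-1$ on $[C, \infty)$. Piecewise linearity makes continuity immediate, and this $h$ cleanly encodes the ``near zero vs.\ beyond $C$'' dichotomy required by the lemma.

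Next I would set
\begin{equation*}
g(\mb{x}) \;:=\; \max_{v \in \mathcal{D}} \; h\!\left(\|\mb{x} - \mb{x}_v\|_2\right).
\end{equation*}
Since $\mathcal{D}$ is finite, this is the pointwise maximum of finitely many continuous compositions (translation, Euclidean norm, and $h$), hence continuous on $U$. Verification of the two cases is then immediate: if $\|\mb{x} - \mb{x}_v\|_2 = 0$ for some $v \in \mathcal{D}$, the corresponding term in the max equals $h(0) = 1 > \epsilon$ (using $\epsilon < 0.5$), so $g(\mb{x}) \geq 1 > \epsilon$. Conversely, if $\|\mb{x} - \mb{x}_v\|_2 > C$ for every $v \in \mathcal{D}$, each term in the max equals $-1$, so $g(\mb{x}) = -1 \leq -\epsilon$.

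There is no substantive analytical obstacle here; the lemma is an existence/identifiability statement used downstream (Lemma 2 will realize such a $g$ by a single hidden-layer MLP via universal approximation, and Lemma 3 will combine these to produce indicator-like embeddings inside the pooling aggregator). The only minor subtlety is making sure the construction is well-defined on all of $U$, not just on the finite set $\{\mb{x}_v\}_{v \in \mathcal{D}}$; this is automatic because $h \circ \|\cdot - \mb{x}_v\|_2$ is continuous on the whole ambient space, so no compactness of $U$ or separation hypothesis on the $\mb{x}_v$'s is actually needed for this particular lemma. The role of the constant $C$ and of the ``buffer'' between the two regimes is just to give the neural-network approximator in Lemma 2 enough slack to pin $g$ into the right sign within tolerance $\epsilon$.
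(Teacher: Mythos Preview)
Your proposal is correct and, if anything, cleaner than the paper's own argument, but the constructions differ. The paper defines $g$ as a \emph{sum} of smooth radial functions,
\[
g(\mb{x}) \;=\; \sum_{v \in \mathcal{D}}\!\left(\frac{3|\mathcal{D}|\epsilon}{b\,\|\mb{x}-\mb{x}_v\|_2^{2}+1} - 2\epsilon\right),
\qquad b=\frac{3|\mathcal{D}|-1}{C^{2}},
\]
and then has to argue that the positive peak at any center dominates the combined negative tails of the remaining $|\mathcal{D}|-1$ summands. Your construction instead takes a \emph{maximum} of piecewise-linear radial bumps, which sidesteps that interference bookkeeping entirely: one center hitting $h(0)=1$ already forces $g\ge 1$, and all centers beyond $C$ force $g=-1$. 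Both routes deliver a continuous $g$, which is all Lemma~2 needs for Hornik's universal approximation theorem, so the extra smoothness in the paper's rational-function choice buys nothing downstream. Your observation that neither compactness of $U$ nor the pairwise separation hypothesis is actually used in this lemma is also accurate; those assumptions only become operative in Lemma~3.
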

\begin{proof}
Many such functions exist. For concreteness, we provide one construction that satisfies these criteria. 
Let $\mb{x} \in U$ denote an arbitrary input to $g$, let $d_v = \|\mb{x} - \mb{x}_v\|_2, \forall v \in \mathcal{D}$, and let $g$ be defined as $g(\mb{x}) = \sum_{v \in \mathcal{D}} g_v(\mb{x})$ with
\newcommand{\setsize}{|\mathcal{D}|}
\begin{equation}
g_v(\mb{x}) =  \frac{3\setsize\epsilon}{bd_v^2 + 1} - 2\epsilon
\end{equation}
where $b = \frac{3\setsize - 1}{C^2}>0$.
By construction:
\begin{enumerate}
\item
	$g_v$ has a unique maximum of $3\setsize\epsilon-2\epsilon>2\setsize\epsilon$ at $d_v = 0$. 
\item
	$\lim_{d_v \rightarrow \infty}\left(  \frac{3\setsize\epsilon}{bd_v^2 + 1} - 2\epsilon \right) = -2\epsilon$
\item 
   $\frac{3\setsize\epsilon}{bd_v^2 + 1} - 2\epsilon \leq -\epsilon$ if $d_v \geq C$. 
\end{enumerate}
Note also that $g$ is continuous on its domain ($d_v \in \mathbb{R}^+$) since it is the sum of finite set of continuous functions. 
Moreover, we have that, for a given input $\mb{x} \in U$, if $d_v\geq C$ for all points $v \in \mathcal{D}$ then $g(\mb{x}) = \sum_{v \in \mathcal{D}} g_v(\mb{a}) \leq {-}\epsilon$ by property 3 above. 
And, if $d_v=0$ for any $v \in \mathcal{D}$, then $g$ is positive by construction, by properties 1 and 2, since in this case,
\begin{align*}
g_v(\mb{x}) + \sum_{v' \in \mathcal{D}\setminus v}g_{v'}(\mb{x}) &\geq g_v(\mb{x}) - (\setsize -1)2\epsilon\\
&> g_v(\mb{x}) - 2(\setsize)\epsilon\\
&>2(\setsize)\epsilon - 2(\setsize)\epsilon\\
& >0,
\end{align*}
 so we know that $g$ is positive whenever $d_v = 0$ for any node and negative whenever $d_v > C$ for all nodes.  
\end{proof}

\begin{lemma}
The function $g : U \rightarrow \mathbb{R}$ can be approximated to an arbitrary degree of precision by standard multilayer perceptron (MLP) with least one hidden layer and a non-constant monotonically increasing activation function (e.g., a rectified linear unit).  In precise terms, if we let $f_{\theta_\sigma}$ denote this MLP and $\theta_\sigma$ its parameters, we have that $\forall \epsilon$, $\exists \theta_\sigma$ such that 
$|f_{\theta_\sigma}(\mb{x}) - g(\mb{x})| < \epsilon|, \forall \mb{x} \in U$.
\end{lemma}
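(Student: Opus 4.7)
The plan is to reduce the statement directly to a standard universal approximation theorem for feedforward networks (e.g., Hornik \cite{hornik1991approximation} or the refinement by Leshno et al.). The point to emphasize is that essentially all the work has already been done in Lemma 1: what remains is to verify the hypotheses of the approximation theorem for the function $g$ constructed there.

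First, I would record that the domain $U \subset \mathbb{R}^d$ is compact by assumption, and that the $g$ built in the proof of Lemma 1 is a finite sum of functions of the form $\mb{x} \mapsto \tfrac{3|\mathcal{D}|\epsilon}{b\|\mb{x}-\mb{x}_v\|_2^2 + 1} - 2\epsilon$. Each summand is a composition of continuous maps (squared Euclidean distance to a fixed point, an affine map, and the continuous map $t \mapsto 1/(bt+1)$ on $t \ge 0$), so $g$ is continuous on all of $U$.

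Next I would invoke the universal approximation theorem: for any continuous function on a compact subset of $\mathbb{R}^d$, and for any non-constant, monotonically increasing (more generally, non-polynomial) activation function $\sigma$, there exist a width $N$ and parameters $\theta_\sigma = (\mb{W}_1, \mb{b}_1, \mb{w}_2, b_2)$ for the single-hidden-layer MLP
\begin{equation*}
f_{\theta_\sigma}(\mb{x}) = \mb{w}_2^\top \sigma(\mb{W}_1 \mb{x} + \mb{b}_1) + b_2
\end{equation*}
such that $\sup_{\mb{x} \in U} |f_{\theta_\sigma}(\mb{x}) - g(\mb{x})| < \epsilon$. This holds in particular for rectified linear units, since ReLU is non-polynomial and monotone nondecreasing (and the classical statements of the theorem extend to this case). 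The conclusion of Lemma 2 is exactly this statement.

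There is no real obstacle: the only thing to be careful about is confirming that the activation functions we are allowed to use in the lemma (``non-constant monotonically increasing'', with ReLU as the running example) fall within the hypotheses of whichever version of the universal approximation theorem we cite. If we prefer Hornik's original formulation, which assumes bounded squashing activations, we would instead cite Leshno et al.'s extension (or simply Pinkus's survey) which covers ReLU. This citation choice is the only real decision; otherwise the proof is a one-line application of continuity of $g$ on the compact set $U$ plus the universal approximation theorem.
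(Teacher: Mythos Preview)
Your proposal is correct and matches the paper's approach exactly: the paper's entire proof is the single line ``This is a direct consequence of Theorem 2 in \cite{hornik1991approximation}.'' Your version simply fills in the verification that $g$ is continuous on the compact set $U$ and discusses which formulation of the universal approximation theorem covers ReLU, which is more careful than the paper itself.
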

\begin{proof}
This is a direct consequence of Theorem 2 in \cite{hornik1991approximation}.
\end{proof}

\begin{lemma}
Let $\mb{A}$ be the adjacency matrix of $G$, let $\mathcal{N}^2(v)$ denote the 2-hop neighborhood of a node, $v$, and define $\chi(G^4)$ as the chromatic number of the graph with adjacency matrix $\mb{A}^4$ (ignoring self-loops). Suppose that there exists a fixed positive constant $C \in \mathbb{R}^+$ such that $\|\mb{x}_v  - \mb{x}_{v'}\|_2 > C$ for all pairs of nodes. Then we have that there exists a parameter setting for Algorithm \ref{alg:basic}, using a pooling aggregator at depth $k=1$, where this pooling aggregator has $\geq 2$ hidden layers with rectified non-linear units, such that 
$$\mb{h}^1_v \neq \mb{h}^1_{v'}, \forall (v,v') \in \{(v,v') : \exists u \in \V, v,v' \in \mathcal{N}^2(u)\}, \mb{h}^1_v, \mb{h}^1_{v'} \in \mathcal{E}^{\chi(\G^4)}_I$$
where $\mathcal{E}^{\chi(\G^4)}_I$ is the set of one-hot indicator vectors of dimension $\mathcal{\chi}(\G^4)$.
\end{lemma}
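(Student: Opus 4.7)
The plan is to produce explicit parameters --- a fixed proper coloring $c$ of the graph power $\G^4$, an internal two-hidden-layer pool MLP, and a final weight matrix $\mb{W}^1$ --- that force $\mb{h}^1_v$ to equal the standard basis vector $\mb{e}_{c(v)} \in \R^{\chi(\G^4)}$. The distinctness conclusion of the lemma then drops out for free, because any pair $(v, v')$ sharing a common 2-hop neighbor $u$ satisfies $d_\G(v, v') \le 4$ (as $v, v' \in \mathcal{N}^2(u)$), so $v$ and $v'$ are adjacent in $\G^4$ and any proper $\chi(\G^4)$-coloring assigns them different colors; this also explains why $\chi(\G^4)$ is the correct output dimension.

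The first substantive step is to build a per-node color detector using the preceding two lemmas. For each color $i \in \{1, \ldots, \chi(\G^4)\}$, I take $\mathcal{D}_i = \{\mb{x}_v : c(v) = i\}$. The separation hypothesis $\|\mb{x}_v - \mb{x}_{v'}\|_2 > C$ for all pairs gives exactly what Lemma~1 needs, yielding a continuous $g_i : U \to \R$ that exceeds $\epsilon$ on the features of color-$i$ nodes and is bounded above by $-\epsilon$ elsewhere. Lemma~2 then realizes each $g_i$ by a ReLU MLP to precision better than $\epsilon/2$. Stacking these coordinatewise produces a vector-valued ReLU MLP $G : U \to \R^{\chi(\G^4)}$ whose output lies within $\epsilon/2$ of $\mb{e}_{c(\cdot)}$ on the discrete input set $\{\mb{x}_v\}_v$. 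A final shift-and-threshold ReLU layer of the form $\mb{y} \mapsto \sigma(2\mb{y} - \mb{1})$, absorbed into $G$, zeroes out every incorrect coordinate and returns a positive value on the correct one, turning the approximate one-hot into the exact indicator $\mb{e}_{c(\cdot)}$.

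The second step installs this composite (two hidden layers in total) as the internal MLP of the pooling aggregator. Applied to each neighbor $u \in \mathcal{N}(v)$ it outputs $\mb{e}_{c(u)}$, and because coordinatewise $\max$ is idempotent on $\{0,1\}$-valued vectors, the pooled representation $\mb{h}^1_{\mathcal{N}(v)}$ equals the binary indicator of the set of colors that appear in $\mathcal{N}(v)$. Since $c$ is proper on $\G^4$, and $\G^4$ contains $\G$ as a subgraph, we have $c(v) \ne c(u)$ for every $u \in \mathcal{N}(v)$, so the $c(v)$-th coordinate of $\mb{h}^1_{\mathcal{N}(v)}$ is always $0$ --- a useful invariant.

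The last step --- and the main obstacle --- is to choose $\mb{W}^1$ so that the final operation $\mb{h}^1_v = \sigma(\mb{W}^1 \cdot \textsc{concat}(\mb{x}_v, \mb{h}^1_{\mathcal{N}(v)}))$ returns the exact one-hot $\mb{e}_{c(v)}$. Line~5 of Algorithm~1 is literally one affine map followed by a single ReLU, yet the target map $\mb{x}_v \mapsto \mb{e}_{c(v)}$ is a piecewise-constant function of features that no single affine-plus-ReLU layer can represent from $\mb{x}_v$ alone. The natural resolution is to adopt the mild convention $v \in \mathcal{N}(v)$ (equivalently, self-loops on the target node for the purpose of Algorithm~1), so that the pool output contains $\mb{e}_{c(v)}$ merged with the neighbor-color $1$'s; then $\mb{W}^1$ can be designed as a selector that, using the left block acting on $\mb{x}_v$ via composition with the already-constructed color detectors (with this additional nonlinear capacity pushed into the pool MLP's spare hidden-layer budget), supplies exactly the negative bias needed to zero out every coordinate except $c(v)$ after the final ReLU. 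Making this reconciliation of effective depth versus the literal pseudocode rigorous --- either by invoking the self-loop convention or by absorbing the missing nonlinearity into the two permitted hidden layers of the pool aggregator --- is the core technical point that the formal proof must nail down.
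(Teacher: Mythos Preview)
Your approach is essentially the same as the paper's: fix a proper $\chi(\G^4)$-coloring, build one ``color detector'' per class via Lemmas~1 and~2, and stack these as the coordinates of the pool MLP so that after a further ReLU layer the output is positive only in the correct color slot and zero elsewhere. Two minor differences are worth noting. First, where you use an explicit shift-and-threshold layer $\mb{y}\mapsto\sigma(2\mb{y}-\mb{1})$ to force an exact one-hot, the paper instead leans on the $\ell_2$-normalization in line~7 of Algorithm~1: once the MLP output is positive in a single coordinate and zero elsewhere, normalization yields the indicator for free. Second, you take the detour through the pooled neighbor colors and then try to recover $\mb{e}_{c(v)}$ via $\mb{W}^1$; the paper never engages with that machinery at all---it simply asserts that the $\chi(\G^4)$ color-detector MLPs ``correspond to different max-pooling dimensions'' and stops there.

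On the ``main obstacle'' you flag: you are right that a single affine-plus-ReLU acting on $\textsc{concat}(\mb{x}_v,\mb{h}^1_{\mathcal{N}(v)})$ cannot by itself implement the piecewise-constant map $\mb{x}_v\mapsto\mb{e}_{c(v)}$. The paper's proof is in fact looser than yours on this point---it does not address the $\mb{W}^1$/concat step at all, effectively treating the deep pool MLP as if it acted directly on each node's own feature. So the architectural reconciliation you call out (self-loop convention, or folding the required nonlinear capacity into the allowed $\geq 2$ hidden layers of the pool MLP and letting $\mb{W}^1$ project out the relevant block) is exactly the kind of liberty the paper is implicitly taking; you are not missing an idea, you are being more careful about a detail the paper hand-waves.
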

\begin{proof}
By the definition of the chromatic number, we know that we can label every node in $\mathcal{V}$ using $\chi(\G^4)$ unique colors, such that no two nodes that co-occur in any node's 2-hop neighborhood are assigned the same color. 
Thus, with exactly $\chi(\G^4)$ dimensions we can assign a unique one-hot indicator vector to every node, where no two nodes that co-occur in any 2-hop neighborhood have the same vector. 
In other words, each color defines a subset of nodes $\mathcal{D} \subseteq \mathcal{V}$ and this subset of nodes can all be mapped to the same indicator vector without introducing conflicts. 

By Lemma 1 and 2 and the assumption that $\|\mb{x}_v  - \mb{x}_{v'}\|_2 > C$ for all pairs of nodes, we can choose an $\epsilon < 0.5$ and there exists a single-layer MLP, $f_{\theta_\sigma}$, such that for any subset of nodes $\mathcal{D} \subseteq \mathcal{V}$:
\begin{equation}
\begin{cases}
f_{\theta_\sigma}(\mb{x}_v) > 0 ,&\forall v \in \mathcal{D}\\
f_{\theta_\sigma}(\mb{x}_v) < 0,& \forall v \in \V \setminus \mathcal{D}.
\end{cases}
\end{equation}
By making this MLP one layer deeper and specifically using a rectified linear activation function, we can return a positive value only for nodes in the subset $\mathcal{D}$ and zero otherwise, and, since we normalize after applying the aggregator layer, this single positive value can be mapped to an indicator vector. 
Moreover, we can create $\chi(\G^4)$ such MLPs, where each MLP corresponds to a different color/subset; equivalently each MLP corresponds to a different max-pooling dimension in equation \ref{eq:pool} of the main text. 
\end{proof}

We now restate Theorem 1 and provide a proof. 

\renewcommand{\thetheorem}{\ref{thm:main}}
\begin{theorem}
Let $\mb{x}_v \in \mathbb{R}^d, \forall v \in \mathcal{V}$ denote the feature inputs for Algorithm \ref{alg:basic} on graph $\mathcal{G}=(\mathcal{V}, \mathcal{E})$, where $U$ is any compact subset of $\mathbb{R}^d$. Suppose that there exists a fixed positive constant $C \in \mathbb{R}^+$ such that $\|\mb{x}_v  - \mb{x}_{v'}\|_2 > C$ for all pairs of nodes. Then we have that $\forall \epsilon > 0$ there exists a parameter setting $\mb{\Theta}^*$ for Algorithm \ref{alg:basic} such that after $K=4$ iterations 
$$|{z}_v - c_v| < \epsilon , \forall v \in \mathcal{\V},$$ where $z_v \in \mathbb{R}$ are final output values generated by Algorithm 1 and $c_v$ are node clustering coefficients, as defined in \cite{watts1998collective}. 
\end{theorem}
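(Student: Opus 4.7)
My plan is to use Lemma~3 to seed the recursion with unique one-hot indicator vectors, and then to use the remaining three iterations to (i) form each node's neighborhood indicator, (ii) aggregate these indicators across each node's neighbors, and (iii) combine the resulting vectors via a final MLP to read off $c_v$. The central combinatorial identity I will exploit is
\[
\mathbf{n}_v^\top \sum_{u \in \mathcal{N}(v)} \mathbf{n}_u \;=\; 2\,|\{(u,w)\in\mathcal{E} : u,w\in\mathcal{N}(v)\}|,
\]
where $\mathbf{n}_v \in \{0,1\}^{\chi(\mathcal{G}^4)}$ denotes the multi-hot indicator of $\mathcal{N}(v)$ relative to the coloring produced by Lemma~3; dividing this quantity by $|\mathcal{N}(v)|(|\mathcal{N}(v)|-1) = \|\mathbf{n}_v\|_1(\|\mathbf{n}_v\|_1 - 1)$ yields exactly $c_v$.

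Concretely, at $k=1$ I invoke Lemma~3 so that $\mathbf{h}^1_v$ is the unique one-hot indicator for $v$ (up to the built-in normalization). At $k=2$, I configure the pooling aggregator with an essentially identity pre-MLP, so that the element-wise max over one-hot inputs returns $\mathbf{n}_v$, and choose $\mathbf{W}^2$ so that the post-concatenation output $\mathbf{h}^2_v$ contains both $\mathbf{h}^1_v$ and a rescaled copy of $\mathbf{n}_v$ in separate coordinate blocks. At $k=3$, I invoke the universal approximation property of the pooling aggregator (Qi et al., already used in Lemmas~1--3) to approximate the sum function to arbitrary precision on the bounded set of possible neighbor encodings; applied to the $\mathbf{n}_u$-block of each $\mathbf{h}^2_u$, this produces $\mathbf{m}_v := \sum_{u \in \mathcal{N}(v)} \mathbf{n}_u$, and $\mathbf{W}^3$ is chosen so that $\mathbf{h}^3_v$ retains both $\mathbf{n}_v$ and $\mathbf{m}_v$. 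Finally, at $k=4$ I zero out the contribution of the aggregated neighborhood and use the MLP as a universal approximator of the continuous map $(\mathbf{n}_v, \mathbf{m}_v) \mapsto \mathbf{n}_v^\top \mathbf{m}_v / (\|\mathbf{n}_v\|_1 (\|\mathbf{n}_v\|_1 - 1))$ on its compact domain, yielding $z_v \approx c_v$ to within the desired tolerance. Since each intermediate map is Lipschitz on the relevant compact set, composing the three approximation errors gives total error below any prescribed $\epsilon$ by an appropriate choice of per-layer precision.

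The main obstacle will be bookkeeping around the per-layer $\ell_2$-normalization in Algorithm~\ref{alg:basic}, which rescales each $\mathbf{h}^k_v$ by a degree-dependent factor and so corrupts the clean combinatorial identities above. I plan to handle this by reserving an auxiliary ``constant'' coordinate in each layer whose pre-normalization value does not depend on the local neighborhood; after normalization, this coordinate exposes the norm, so the MLP at the next layer can invert the rescaling before executing the arithmetic. A secondary subtlety is that $c_v$ is undefined when $|\mathcal{N}(v)| \leq 1$, which I will address by adopting the standard convention $c_v = 0$ and implementing it as a gate on $\|\mathbf{n}_v\|_1$ inside the final MLP. The required hidden dimension throughout is $O(\chi(\mathcal{G}^4))$, which in the worst case is $O(|\mathcal{V}|)$, matching the dimensionality remark preceding the lemmas.
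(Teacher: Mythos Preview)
Your proposal is correct and follows essentially the same architecture as the paper's proof: invoke Lemma~3 at $k=1$ to obtain one-hot indicators, aggregate at $k=2$ to obtain the neighborhood indicator $\mathbf{n}_v$ (the paper calls this the adjacency row $\mathbf{A}_v$ and obtains it by summing rather than maxing, which coincide on distinct one-hots), sum these over neighbors at $k=3$ via the symmetric universal approximation result of \cite{qi2016pointnet}, and read off $c_v$ with a single-layer MLP at $k=4$. Your handling of the $\ell_2$-normalization via a reserved constant coordinate and of the degenerate case $|\mathcal{N}(v)|\le 1$ are exactly the bookkeeping moves the paper gestures at but treats more informally.
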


\begin{proof}
Without loss of generality, we describe how to compute the clustering coefficient for an arbitrary node $v$.
For notational convenience we use $\oplus$ to denote vector concatenation and $d_v$ to denote the degree of node $v$.
This proof requires 4 iterations of Algorithm 1, where we use the pooling aggregator at all depths. 
For clarity and we ignore issues related to vector normalization and we use the fact that the pooling aggregator can approximate any Hausdorff continuous function to an arbitrary $\epsilon$ precision \cite{qi2016pointnet}.
Note that we can always account for normalization constants (line 7 in Algorithm 1) by having aggregators prepend a unit value to all output representations; the normalization constant can then be recovered at later layers by taking the inverse of this prepended value. 
Note also that almost certainly exist settings where the symmetric functions described below can be computed exactly by the pooling aggregator (or a variant of it), but the symmetric universal approximation theorem of \cite{qi2016pointnet} along with Lipschitz continuity arguments suffice for the purposes of proving identifiability of clustering coefficients (up to an arbitrary precision).
In particular, the functions described below, that we need approximate to compute clustering coefficients, are all Lipschitz continuous on their domains (assuming we only run on nodes with positive degrees) so the errors introduced by approximation remain bounded by fixed constants (that can be made arbitrarily small). 

We assume that the weight matrices, $\mb{W}^1, \mb{W}^2$ at depths $k=2$ and $k=3$ are the identity, and that all non-linearities are rectified linear units. 
In addition, for the final iteration (i.e, $k=4$) we completely ignore neighborhood information and simply treat this layers as an MLP with a single hidden layer.
Theorem 1 can be equivalently stated as requiring $K=3$ iterations of Algorithm 1, with the representations then being fed to a single-layer MLP. 

By Lemma 3, we can assume that at depth $k=1$ all nodes in $v$'s 2-hop neighborhood have unique, one-hot indicator vectors, $\mb{h}^1_v \in \mathcal{E}_I$. 
Thus, at depth $k=2$ in Algorithm 1, suppose that we sum the unnormalized representations of the neighboring nodes. Then without loss of generality, we will have that $\mb{h}^{2}_v = \mb{h}^1_v \oplus \mb{A}_{v}$ where $\mb{A}$ is the adjacency matrix of the subgraph containing all nodes connected to $v$ in $G^4$ and $\mb{A}_v$ is the row of the adjacency matrix corresponding to $v$. 
Then, at depth $k=3$, again assume that we sum the neighboring representations (with the weight matrices as the identity), then we will have that 
\begin{equation}
\mb{h}_v^3 = \mb{h}^1_v \oplus \mb{A}_{v} \oplus \left(\sum_{v \in \mathcal{N}(v)} \mb{h}^1_v \oplus \mb{A}_{v}\right).
\end{equation}
Letting $m$ denote the dimensionality of the $\mb{h}^1_v$ vectors (i.e., $m \equiv \chi(G^4)$ from Lemma 3) and using square brackets to denote vector indexing, we can observe that
\begin{itemize}
\item
	$\mb{a} \equiv \mb{h}^3_v[0:m]$ is $v$'s one-hot indicator vector. 
\item
	$\mb{b} \equiv \mb{h}^3_v[m:2m]$ is $v$'s row in the adjacency matrix, $\mb{A}$.
\item
	$\mb{c} \equiv \mb{h}^3_v[3m:4m]$ is the sum of the adjacency rows of $v$'s neighbors. 
\end{itemize}

Thus, we have that $\mb{b}^\top\mb{c}$ is the number of edges in the subgraph containing only $v$ and it's immediate neighbors and $\sum_{i=0}^{m}\mb{b}[i] = d_v$.
Finally we can compute
\begin{align}
\frac{2(\mb{b}^\top\mb{c} - d_v)}{(d_v)(d_v - 1)}  &= \frac{2 | \{ e_{v,v'} : v, v' \in \mathcal{N}(v), e_{v,v'} \in \E \}|}{(d_v)(d_v - 1)}\\
&=c_v,
\end{align}
and since this is a continuous function of $\mb{h}^{3}_v$, we can approximate it to an arbitrary $\epsilon$ precision with a single-layer MLP (or equivalently, one more iteration of Algorithm 1, ignoring neighborhood information). 
Again this last step follows directly from \cite{hornik1991approximation}.
\end{proof}

\renewcommand{\thetheorem}{2}
\begin{corollary}
Suppose we sample nodes features from any probability distribution $\mu$ over $\mb{x} \in U$, where $\mu$ is absolutely continuous with respect to the Lebesgue measure. Then the conditions of Theorem 1 are almost surely satisfied with feature inputs $\mb{x}_v \sim \mu$. 
\end{corollary}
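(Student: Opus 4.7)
The plan is to reduce the corollary to a straightforward fact about non-atomic measures combined with a union bound over finitely many node pairs. Recall that Theorem 1 requires only the existence of some positive constant $C$ with $\|\mathbf{x}_v - \mathbf{x}_{v'}\|_2 > C$ for every pair $v \neq v'$; since $\mathcal{V}$ is finite, this is equivalent to the statement that the finitely many sampled feature vectors $\{\mathbf{x}_v\}_{v \in \V}$ are pairwise distinct. So the whole corollary comes down to showing that, under independent sampling from $\mu$, collisions $\mathbf{x}_v = \mathbf{x}_{v'}$ occur with probability zero.

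First I would fix an arbitrary pair of distinct nodes $v, v'$ and argue that $\mathbb{P}(\mathbf{x}_v = \mathbf{x}_{v'}) = 0$. The key input is that $\mu$ is absolutely continuous with respect to Lebesgue measure, which implies $\mu$ is non-atomic: every singleton $\{\mathbf{x}\} \subset \mathbb{R}^d$ has Lebesgue measure zero, hence $\mu(\{\mathbf{x}\}) = 0$. By Fubini (using independence of $\mathbf{x}_v$ and $\mathbf{x}_{v'}$),
\begin{equation*}
\mathbb{P}(\mathbf{x}_v = \mathbf{x}_{v'}) = \int_U \mu(\{\mathbf{x}\}) \, d\mu(\mathbf{x}) = 0.
\end{equation*}
Alternatively one may observe directly that the diagonal $\{(\mathbf{x},\mathbf{y}) : \mathbf{x} = \mathbf{y}\} \subset \mathbb{R}^{2d}$ has Lebesgue measure zero in $\mathbb{R}^{2d}$ and hence $(\mu \times \mu)$-measure zero.

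Next I would apply a finite union bound over the at most $\binom{|\V|}{2}$ pairs to conclude that, with probability one, $\mathbf{x}_v \neq \mathbf{x}_{v'}$ for all pairs simultaneously. On this almost-sure event, the finite quantity $C_0 := \min_{v \neq v'} \|\mathbf{x}_v - \mathbf{x}_{v'}\|_2$ is strictly positive, so taking $C := C_0/2 > 0$ yields $\|\mathbf{x}_v - \mathbf{x}_{v'}\|_2 > C$ for all pairs, which is exactly the hypothesis of Theorem 1.

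There is really no hard step here; the only subtlety is invoking absolute continuity correctly to rule out atoms, and noting that the finiteness of $\V$ is what lets us pass from ``any given pair agrees with probability zero'' to ``some uniform separation constant exists almost surely.'' If $\V$ were infinite, the argument would fail (the infimum could be zero), so the finite-graph assumption implicit in Algorithm 1 is essential and should be mentioned.
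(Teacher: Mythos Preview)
Your proposal is correct and follows essentially the same approach as the paper: the paper's argument is simply the one-line observation that absolute continuity with respect to Lebesgue measure implies the probability of sampling two identical points is zero, from which the corollary follows directly. Your version just spells out the details (non-atomicity, Fubini/union bound, extracting a positive $C$ via the finite minimum), which the paper leaves implicit.
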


Corollary 2 is a direct consequence of Theorem 1 and the fact that, for any probability distribution that is absolutely continuous w.r.t. the Lebesgue measure, the probability of sampling two identical points is zero. 
Empirically, we found that \name-pool was in fact capable of maintaining modest performance by leveraging graph structure, even with completely random feature inputs (see Figure \ref{fig:noise}).
However, the performance \name-GCN was not so robust, which makes intuitive sense given that the Lemmas 1, 2, and 3 rely directly on the universal expressive capability of the pooling aggregator. 

\renewcommand{\thetheorem}{3}
\begin{figure}
\centering
\includegraphics[scale=1.0]{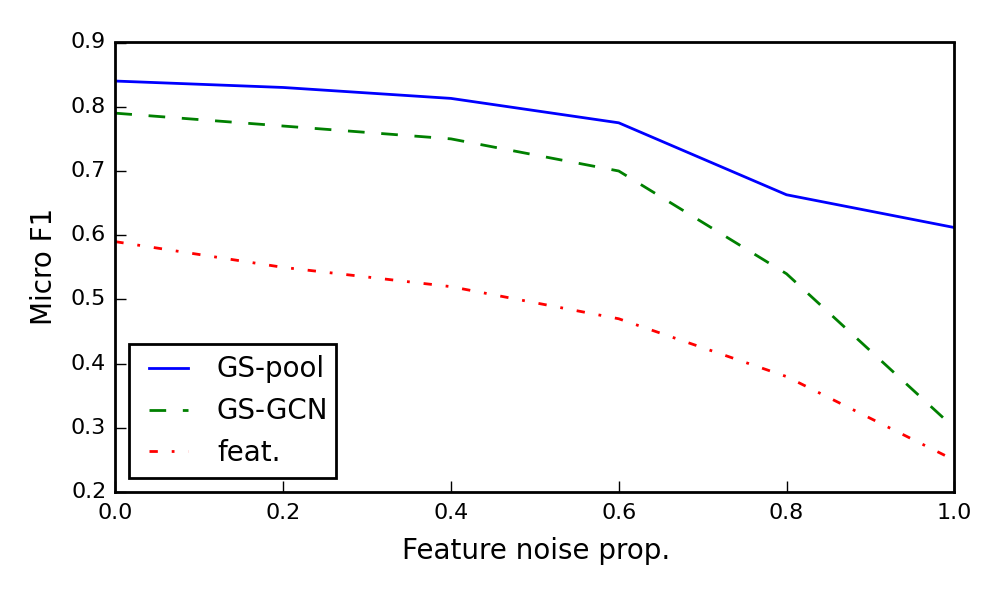}
\caption{Accuracy (in F1-score) for different approaches on the citation data as the feature matrix is incrementally replaced with random Gaussian noise.}
\label{fig:noise}
\end{figure}

Finally, we note that Theorem 1 and Corollary 2 are expressed with respect to a particular given graph and are thus somewhat transductive.
For the inductive setting, we can state
\begin{corollary}
Suppose that for all graphs $\G=(\V,\E)$ belonging to some class of graphs $G^*$, we have that $\exists k,d \geq 0, k,d \in \mathbb{Z}$ such that
$$\mb{h}^k_v \neq \mb{h}^k_{v'}, \forall (v,v') \in \{(v,v') : \exists u \in \V, v,v' \in \mathcal{N}^3(u)\}, \mb{h}^k_v, \mb{h}^k_{v'} \in \mathcal{E}^{d}_I,$$
then we can approximate clustering coefficients to an arbitrary epsilon after $K=k+4$ iterations of Algorithm \ref{alg:basic}.
\end{corollary}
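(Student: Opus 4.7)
The plan is to mirror the construction in the proof of Theorem~\ref{thm:main}: the $k$ iterations promised by the hypothesis play the role that the single iteration from Lemma~3 played there, and the remaining four iterations (indices $k+1,\ldots,k+4$) reproduce the identification-and-counting pipeline that turns one-hot indicators into clustering coefficients. The strengthening from $\mathcal{N}^2$ to $\mathcal{N}^3$ in the hypothesis provides exactly enough slack so that the intermediate aggregated representations remain conflict-free even when the indicator-generation step is abstracted away rather than constructed explicitly as in Lemma~3.

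At iteration $k+1$, I would configure the pooling aggregator and weight matrix so that $\mb{h}^{k+1}_v$ is (to within arbitrary approximation) $\mb{h}^k_v$ concatenated with $\sum_{u \in \mathcal{N}(v)} \mb{h}^k_u$, i.e., $v$'s own indicator together with the multi-hot vector encoding its immediate neighbors; this is the analog of the $\mb{h}^1_v \oplus \mb{A}_v$ object in Theorem~\ref{thm:main}. Iteration $k+2$ repeats the aggregation, so that $v$ also sees the adjacency patterns of each of its immediate neighbors, yielding a representation that records $v$'s indicator, $v$'s adjacency row, and the sum $\sum_{u \in \mathcal{N}(v)} \mb{A}_u$ of its neighbors' adjacency rows. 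The important observation is that every index appearing in these sums corresponds to a node at graph-distance at most two from $v$; all such nodes, together with $v$ itself, lie inside a common $3$-hop neighborhood, so the $\mathcal{N}^3$ hypothesis guarantees no index collisions, exactly as the $\mathcal{N}^2$ hypothesis did in Theorem~\ref{thm:main}. Iteration $k+3$ serves as a (mostly linear) bookkeeping pass that arranges the relevant sub-vectors into the canonical $\mb{a}, \mb{b}, \mb{c}$ format of Theorem~\ref{thm:main} and carries forward a unit-valued coordinate to absorb the normalization in line~7 of Algorithm~\ref{alg:basic}, using the ``prepended unit'' trick from the earlier proof.

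Iteration $k+4$ is then treated as a purely local MLP layer (no aggregation), which, by the universal approximation theorem of \cite{hornik1991approximation}, can approximate the continuous function $(\mb{b},\mb{c},d_v) \mapsto \frac{2(\mb{b}^\top\mb{c}-d_v)}{d_v(d_v-1)}$ to arbitrary precision on its bounded integer-valued domain, giving the required $|z_v - c_v| < \epsilon$. The main technical obstacle is error composition: the symmetric aggregations at the intermediate depths are only approximated, via \cite{qi2016pointnet}, and one must verify that the per-layer precision can be chosen small enough to meet the final $\epsilon$ tolerance uniformly over the class $G^*$. Since every intermediate map is Lipschitz on its bounded domain and the number of post-identification layers is fixed at four, this composition is routine and introduces no new difficulties beyond those already handled in the proof of Theorem~\ref{thm:main}.
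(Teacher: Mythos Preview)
Your proposal is correct and follows the same approach the paper intends: the paper gives no detailed proof of this corollary at all, merely remarking in one sentence that it ``simply states that if after $k$ iterations of Algorithm~1, we can learn to uniquely identify nodes for a class of graphs, then we can also approximate clustering coefficients to an arbitrary precision''---i.e., rerun the Theorem~\ref{thm:main} construction with the hypothesis playing the role of Lemma~3. Your write-up is a faithful (and more careful) expansion of exactly that idea; the extra bookkeeping pass you insert at depth $k{+}3$ and your discussion of $\mathcal{N}^3$ versus $\mathcal{N}^2$ are reasonable ways to reconcile the corollary's stated constants with the Theorem~\ref{thm:main} pipeline, though the paper itself does not spell out why those particular constants were chosen.
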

Corollary 3 simply states that if after $k$ iterations of Algorithm 1, we can learn to uniquely identify nodes for a class of graphs, then we can also approximate clustering coefficients to an arbitrary precision for this class of graphs.

\end{document}